\newcommand{\SBM}{\mathbf{SBM}}
\newtheorem{theorem}{Theorem}[section]
\newtheorem{lemma}[theorem]{Lemma}
\newtheorem{claim}[theorem]{Claim}
\newtheorem{definition}[theorem]{Definition}
\newcommand{\ignore}[1]{}
\newcommand{\cA}{{\cal A}}
\newcommand{\eps}{\varepsilon}
\newcommand{\poly}{\mathrm{poly}}
\newcommand{\wt}{\widetilde}
\newcommand{\bone}{{\bf 1}}
\newcommand{\bX}{\boldsymbol{X}}
\newcommand{\EX}{\hbox{\bf E}}
\newcommand{\eqdef}{:=}
\newcommand{\wgt}{\mathrm{wt}}
\newcommand{\extract}{{\tt Extract}}
\newcommand{\decompose}{{\tt Decompose}}
\newcommand{\Sec}[1]{\hyperref[sec:#1]{\S\ref*{sec:#1}}} 
\newcommand{\Eqn}[1]{\hyperref[eq:#1]{(\ref*{eq:#1})}} 
\newcommand{\Fig}[1]{\hyperref[fig:#1]{Fig.\,\ref*{fig:#1}}} 
\newcommand{\Tab}[1]{\hyperref[tab:#1]{Tab.\,\ref*{tab:#1}}} 
\newcommand{\Thm}[1]{\hyperref[thm:#1]{Theorem\,\ref*{thm:#1}}} 
\newcommand{\Fact}[1]{\hyperref[fact:#1]{Fact\,\ref*{fact:#1}}} 
\newcommand{\Lem}[1]{\hyperref[lem:#1]{Lemma\,\ref*{lem:#1}}} 
\newcommand{\Prop}[1]{\hyperref[prop:#1]{Prop.~\ref*{prop:#1}}} 
\newcommand{\Cor}[1]{\hyperref[cor:#1]{Corollary~\ref*{cor:#1}}} 
\newcommand{\Conj}[1]{\hyperref[conj:#1]{Conjecture~\ref*{conj:#1}}} 
\newcommand{\Def}[1]{\hyperref[def:#1]{Definition~\ref*{def:#1}}} 
\newcommand{\Alg}[1]{\hyperref[alg:#1]{Alg.~\ref*{alg:#1}}} 
\newcommand{\Clm}[1]{\hyperref[clm:#1]{Claim~\ref*{clm:#1}}} 
\newcommand{\Obs}[1]{\hyperref[obs:#1]{Observation~\ref*{obs:#1}}} 
\newcommand{\Rem}[1]{\hyperref[rem:#1]{Remark~\ref*{rem:#1}}} 
\newcommand{\Con}[1]{\hyperref[con:#1]{Construction~\ref*{con:#1}}} 
\newcommand{\Step}[1]{\hyperref[step:#1]{Step~\ref*{step:#1}}} 
\newcommand{\Assumption}[1]{\hyperref[assm:#1]{Assumption\,\ref*{assm:#1}}} 
\newcommand{\ProbabilityRender}[2]{
  \@ifnextchar\bgroup%
  {\renderwithdist{#1}{#2}}
   {\singlervrender{#1}{#2}}
}
\newcommand{\singlervrender}[2]{%
   \ensuremath{\mathchoice
       {{#1}\left[ #2 \right]}
       {{#1}[ #2 ]}
       {{#1}[ #2 ]}
       {{#1}[ #2 ]}
   }
}
\newcommand{\renderwithdist}[3]{%
   \@ifnextchar\bgroup
   {\superfancyrender{#1}{#2}{#3}}
   {\ensuremath{\mathchoice
      {\underset{#2}{#1}\left[ #3 \right]}
      {{#1}_{#2}[ #3 ]}
      {{#1}_{#2}[ #3 ]}
      {{#1}_{#2}[ #3 ]}
     }
   }
}
\newcommand{\superfancyrender}[5]{
   \ensuremath{\mathchoice
      {\underset{#1}{{#1}}\left#4 #3 \right#5}
      {{#1}_{#2}#4 #3 #5}
      {{#1}_{#2}#4 #3 #5}
      {{#1}_{#2}#4 #3 #5}
   }
}
\newcommand{\stc}{\tau}
\newcommand{\stch}{\eps}
\begin{document}
	\title{\Large Spectral Triadic Decompositions of Real-World Networks}
	\author{Sabyasachi Basu\thanks{Department of Computer Science and Engineering, University of California, Santa Cruz {\href{mailto:sbasu3@ucsc.edu}{sbasu3@ucsc.edu}}}
		\and Suman Kalyan Bera\thanks{Apple {\href{mailto:sumankalyanbera@gmail.com}{sumankalyanbera@gmail.com}}}
		\and C. Seshadhri\thanks{Department of Computer Science and Engineering, University of California, Santa Cruz {\href{mailto:sesh@ucsc.edu}{sesh@ucsc.edu}}
			\newline
			{SB and CS are supported by NSF DMS-2023495 and CCF-1839317.}}}
	
	\date{}
	\maketitle
	\begin{abstract}
		A fundamental problem in mathematics and network analysis
		is to find conditions under which a graph can be partitioned 
		into smaller pieces. A ubiquitous tool for this partitioning is the
		Fiedler vector or discrete Cheeger inequality. These results relate the graph
		spectrum (eigenvalues of the normalized adjacency matrix) to the ability to break a graph into two
		pieces, with few edge deletions. An entire subfield of mathematics, called spectral
		graph theory, has emerged from these results. Yet these results do not say anything
		about the rich community structure exhibited by real-world networks, which typically
		have a significant fraction of edges contained in numerous densely clustered blocks. Inspired by
		the properties of real-world networks, we discover a new spectral condition that
		relates eigenvalue powers to a network decomposition into densely clustered blocks.
		We call this the \emph{spectral triadic decomposition}.
		Our relationship exactly predicts the existence of community structure, as commonly seen
		in real networked data. Our proof provides an efficient algorithm to produce
		the spectral triadic decomposition. We observe on numerous social, coauthorship, and citation network datasets that these decompositions have significant correlation with semantically meaningful communities.
	\end{abstract}
	\section{Introduction} \label{sec:intro}

	The existence of clusters or community structure is one of the most fundamental 
	properties of real-world networks. Across various scientific disciplines, be it biology,
	social sciences, or physics, the modern study of networks has often dealt with
	the community structure of these data. Procedures that discover community
	structure have formed an integral part of network science algorithmics. Despite
	the large variety of formal definitions of a community in a network, there
	is broad agreement that it constitutes a dense substructure in an overall sparse network. 
	Indeed, the discovery of local density (also called \emph{clustering coefficients}) goes
	back to the birth of network science.
	
	Even beyond network science, graph partitioning is a central problem in applied mathematics
	and the theory of algorithms. Determining when such a partitioning is possible is a fundamental
	question that straddles graph theory, harmonic analysis, differential geometry, and theoretical
	computer science. There is a large body of mathematical and scientific research on how to break
	up a graph into smaller pieces.
	
	An important mathematical tool for graph partitioning is 
	the \emph{discrete Cheeger inequality} or the \emph{Fiedler vector}. This result
	is the cornerstone of spectral graph theory and relates the eigenvalues
	of the graph Laplacian to the combinatorial structure. Consider an undirected
	graph $G = (V,E)$ with $n$ vertices. Let $d_i$ denote the degree of the vertex $i$.
	The \emph{normalized adjacency matrix}, denoted $\cA$,
	is the $n \times n$ matrix where the entry $\cA_{ij}$ is $1/\sqrt{d_i d_j}$
	if $(i,j)$ is an zero, and zero otherwise. (All diagonal entries are zero.)
	One can think of this entry as the ``weight" of the edge between $i$ and $j$.
	
	Let $\lambda_1 \geq \lambda_2 \ldots \geq \lambda_n$ denote the $n$ eigenvalues
	of the non-negative symmetric matrix $\cA$. The largest eigenvalue $\lambda_1$ is always one.
	A basic fact is that $\lambda_2 = 1$ iff $G$ is disconnected. 
	The discrete Cheeger inequality proves that if $\lambda_2$ is close to $1$ (has value $\geq 1-\eps$),
	then $G$ is ``close" to being disconnected. Formally, there exists a set $S$ of vertices
	that can be disconnected (from the rest of $G$) by removing an $O(\sqrt{\eps})$-fraction
    of edges incident to $S$. The cuts are measured by conductance, which is (roughly) the fraction of edges leaving a set. A low conductance
    cut can thus be separated by removing few edges.
	We can summarize these observations as:
	
	\medskip
	\noindent
	{\bf Basic fact:} \ \ Spectral gap is zero \ \ $\Longrightarrow$ \ \ $G$ is disconnected \\
	{\bf Cheeger bound:} \ \ Spectral gap is close to zero \ \ $\Longrightarrow$ \ \ $G$ can be disconnected by low conductance set
	
	\medskip
	
	There is a rich literature of generalizing this bound for higher-order
	networks and simplicial complices. We note that many modern algorithms for finding communities
	in real-world networks are based on the Cheeger inequality in some form. The seminal Personalized
	PageRank algorithm provides a local version of the Cheeger bound~\cite{ACL06}.
	
	For modern network analysis and community structure, there are several unsatisfying aspects of
	the Cheeger inequality. 
	Despite the variety of formal definitions of a community in a network, there
	is broad agreement that it constitutes many densely clustered substructures in an overall sparse network.
	The Cheeger inequality only talks of disconnecting $G$ into two parts. Even currently known generalizations
	of the Cheeger inequality only work for a constant number of parts~\cite{LeGh+14}. Real-world networks
	decompose into an extremely large of number of blocks/communities, and this number
	often scales with the network size~\cite{LeLaDa08,SeKoPi12}. Secondly, the Cheeger bound works when the spectral gap 
	is close to zero, which is often not true for real-world networks~\cite{LeLaDa08}.
	Real-world networks possess the small-world property~\cite{Kl00}. But this property
	implies large spectral gap. 
	Thirdly, Cheeger-type inequalities make no assertion on the interior of parts obtained. In community structure,
	we typically expect the interior to be dense and potentially assortative (possessing vertices
	of similar degree).
	
	\medskip
	
	The main question that we address: \emph{is there a spectral quantity that
		predicts the existence of real-world community structure?}
	
	\begin{figure*}
		\centering
		\begin{minipage}{.38\linewidth}
			\begin{subfigure}
				\centering
				{\includegraphics[width=\textwidth]{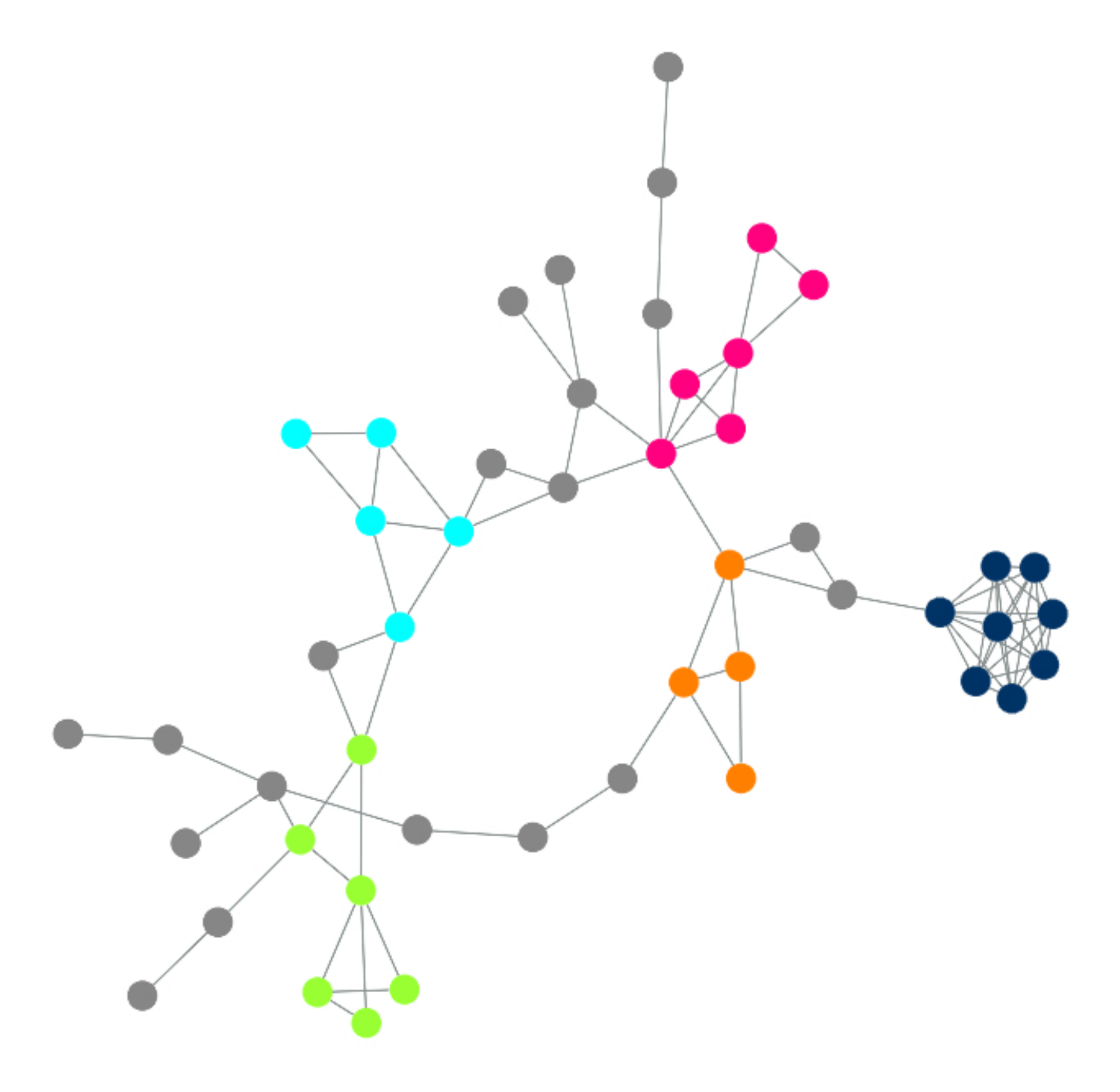}}
				\caption*{A toy decomposition: a subset of a co-authorship network.} 
			\end{subfigure}
		\end{minipage}
		\vline
		\begin{minipage}{.38\linewidth}
			\centering
			\begin{subfigure}
				\centering
				{\includegraphics[width=0.5\textwidth]{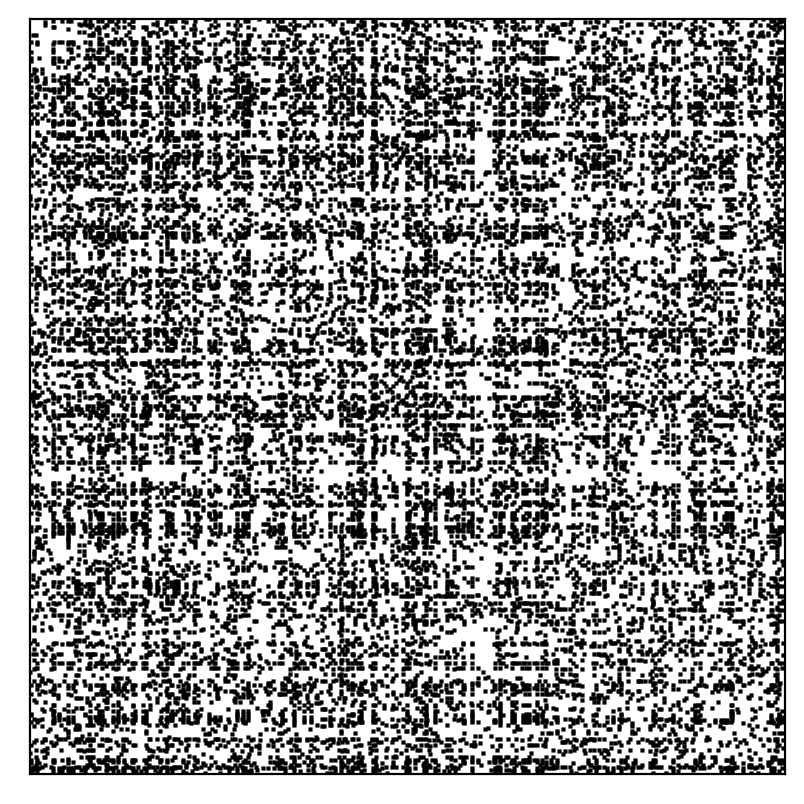}}
				\caption*{A subgraph from a facebook network}
			\end{subfigure} 
			\begin{subfigure}
				\centering
				{\includegraphics[width=0.5\textwidth]{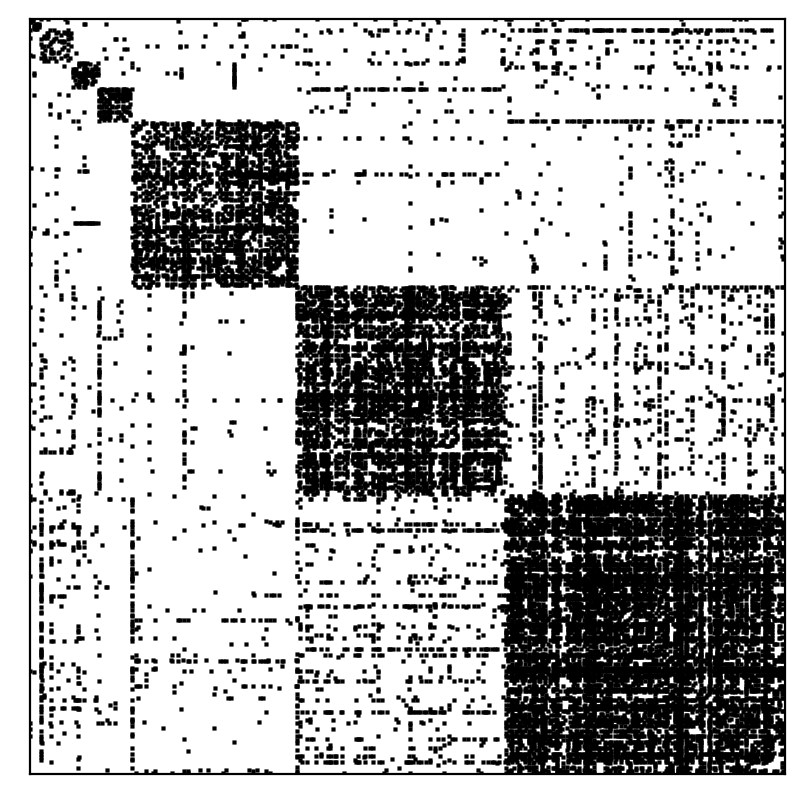}}
				\caption*{Subgraph ordered by extracted clusters}
			\end{subfigure} 
		\end{minipage}
		\caption{On the left, as a small example, we consider a subgraph induced by $155$ vertices and $\tau=0.49$ from a coauthorship network of Condensed Matter Physics researchers\cite{NewmanCondMat99}, and show a spectral triadic decomposition of the largest connected component, which has $49$ vertices.
			Each cluster is colored differently. We see how each cluster forms a densely connected component within an otherwise sparse graph. 
			Also note that the clusters vary in size. The gray vertices do not participate in the decomposition, since they do not
			add significant to the cluster structure. On the right, we look at the adjacency matrices pre and post decomposition. The top figure is a spy plot of the adjacency matrix of 488 connected vertices from a Facebook network (\cite{traud2012social},\cite{Traud:2011fs}) taken from the network repository\cite{nr}, a graph with $\tau=0.122$. As a demonstration,
			we compute the spectral triadic decomposition of this subnetwork. We group the columns/rows by the clusters in the spy plot on the bottom. The latent community structure
			is immediately visible. Note that there exists many such blocks of varying sizes.} \label{fig:spytoy}
	\end{figure*}
		\subsection{Main result} \label{sec:result}
	
	We take inspiration from a central property of real-world graphs, the abundance of triangles~\cite{WaSt98,SeKoPi12}.
	This abundance is widely seen across graphs that come by disparate domains. Recent work in network
	science and data mining have used the triangles to effectively cluster graphs. There is much evidence
	that the triangle structure aids finding communities in graphs~\cite{SaSePi+15,Ts15,BeGlLe16,TPM17}.
	
	In network science, the triangle count is often expressed in terms of the \emph{transitivity}
	or global clustering coefficient~\cite{Faust,WF94}.
	We define the \emph{spectral transitivity} of the graph $G$.
	
	\begin{definition} \label{def:spec-trans} The spectral transitivity of $G$, denoted $\stc(G)$,
		is defined as follows\footnote{If $G$ (or the normalized adjacency matrix $\cA$) are obvious from context, we
			simply refer to $\stc$ instead of $\stc(G)$.}. (Recall that the $\lambda_i$s are the eigenvalues of the normalized adjacency matrix.)
		\begin{equation}
			\stc(G) = \dfrac{\sum_{i\leq n} \lambda_i^3}{\sum_{i\leq n}\lambda_i^2}.
		\end{equation}
	\end{definition} 
	
	Standard arguments show
	that the spectral transitivity is a degree weighted transitivity. The numerator
	is a weighted sum over all triangles, while the denominator (squared Frobenius norm)
	is a weighted sum over edges (\Lem{stc}). 
	
	Observe that since $\lambda_i \leq 1$, $\stc \leq 1$. When $\stc$ reaches
	its maximum value of $1-1/(n-1)$, one can show that $G$ is a clique (\Lem{ratio-basic}).
	We formalize the notion of "clique-like" submatrices through the concept of uniformity.
	For a symmetric matrix $M$ and a subset $S$ of its columns/rows, we use $M|_S$ to denote
	the square submatrix restricted to $S$ (on both columns and rows).
	
    \begin{restatable}{definition}{uniform} \label{def:uniform} Let $\alpha \in (0,1]$. Let $\cA$ be the normalized adjacency matrix of a graph $G$. For any subset of vertices $S$, $\cA|_S$ is called $\alpha$-uniform if at least an $\alpha$-fraction of non diagonal entries have values at least $\alpha/(|S|-1)$.
		
		For $s \in S$, let $N(s,S)$ denote the neighborhood of $s$ in $S$ (we define
		edges by non-zero entries). An $\alpha$-uniform matrix is \emph{strongly $\alpha$-uniform}
		if for at least an $\alpha$-fraction of $s \in S$, $\cA|_{N(s,S)}$ is also $\alpha$-uniform.
	\end{restatable}
	
	Observe that the normalized adjacency matrix of a clique is (strongly) $1$-uniform. But submatrices
	of this matrix are not. Roughly speaking, a constant uniform submatrix corresponds to a dense
	subgraph of (say) size $k$ where the \emph{total} degrees (i.e. degrees in the original graph) of vertices is $\Theta(k)$.
	Strong uniformity is closely related to \emph{clustering coefficients}, which is the
	edge density of neighborhoods. It is well-known that real-world graphs have high clustering 
	coefficients~\cite{WaSt98,SeKoPi12}. A strongly uniform submatrix essentially exhibits high clustering coefficients.
	
	Our main theorem states that any graph with constant spectral transitivity
	can be decomposed into constant uniform blocks. We use $\|M\|_F$ to denote
	the Frobenius norm of matrix $M$.
	
	\begin{theorem}[Spectral Theorem] \label{thm:main-decomp} 
		Let $\cA$ be the normalized adjacency matrix of a graph with spectral transitivity $\stc$. 
		
		There exists a collection of disjoint sets of vertices $X_1, X_2, \ldots, X_k$
		satisfying the following conditions:
		\begin{compactenum}
        \item (Cluster structure) For all $i \leq k$, $\cA|_{X_i}$ is strongly $\poly(\stc)$-uniform.
        \item (Coverage) $\sum_{i \leq k} \|\cA|_{X_i}\|^2_F \geq \poly(\stc)\|\cA\|^2_F$.
		\end{compactenum}
        (The notation $\poly(\stc)$ denotes some fixed polynomial of $\stc$.)
	\end{theorem}
	
	We call this output the \emph{spectral triadic decomposition}. 
	Our proof also yields an efficient algorithm that computes the decomposition, whose running time
	is dominated by a triangle enumeration. Details in are given in \Thm{runtime} and \Sec{impl}.
	We thus answer the question posed at the end of the preceding section positively:

	\medskip
	\noindent
	{\bf Our result:} \ \ Spectral transitivity is high \ \ $\Longrightarrow$ \ \ $G$ decomposes into disjoint, dense clusters.
	
	\medskip
	\subsection{Significance of \Thm{main-decomp}} \label{sec:signif}
	
	One can think of \Thm{main-decomp} as a type of Cheeger inequality that is relevant to
	the structure of real-world social networks. We explain how it captures many of the 
	salient properties of clusters in real-world networks. In this discussion,
	we will assume that $\stc$ is a constant.
	
	{\bf The spectral transitivity:} We find it remarkable that
	a bound on a single spectral quantity, $\stc$, implies such a rich decomposition.
	The spectral transitivity $\stc$ captures a key property of real-world
	graphs, the abundance of triangles. While there is a rich body of empirical
	work on using triangles to cluster graphs, there is no theory explaining \emph{why}
	triangles are so useful. \Thm{main-decomp} gives a spectral-theoretic explanation.
	
	The spectral transitivity is a weighted version of the transitivity, which
	is typically around $0.1$ for real-world graphs. We also note that the final
	algorithm that computes the decomposition focuses on triangle cuts, which
	is a popular empirical technique for finding clusters in social networks~\cite{BeGlLe16,TPM17}.
	
	{\bf The strong uniformity of clusters:} Each cluster $X_i$ of the spectral triadic decomposition is (constant)
	strongly uniform. While there is no one definition of a``community" in real-world graphs,
	the definition of strong uniformity captures many basic concepts. Most importantly, $X_i$
	is internally dense in edges. Let $|X_i| = c$. Then $\Omega(c^2)$ entries in $X_i$
	are $\Omega(1/c)$, which (by averaging) implies that a constant fraction of $X_i$ involves vertices
	of degree $\Theta(c)$. Thus, a constant fraction of $X_i$ vertices have a constant fraction
	of their neighbors in $X_i$. Moreover, the submatrix of every neighborhood in $X_i$ is
	also uniform. This is quite consistent with the typical notion of a social network
	community. 
	
	Crucially, \Thm{main-decomp} gives a condition on the \emph{internal} structure of the decomposition.
	This addresses a key weakness of the Cheeger inequality.
	
	{\bf The coverage condition:} It is natural to measure the ``mass" of a matrix by the squared Frobenius norm.
	The clusters of spectral triadic decomposition of \Thm{main-decomp} capture a constant fraction
	of this squared norm. This is consistent with the fact that a constant fraction 
	of the edges in a real-world graph are \emph{not} community edges~\cite{Mi67,Gr83,Kl00,SeKoPi12}.
	Any decomposition into communities would avoid these "long-range" edges, excluding
	a constant fraction of the matrix mass. 
	
	{\bf Robustness to noise:} Taking the above point further, the non-community edges
	are often modeled as stochastic (or noisy). The underlying cluster structure of a real-world
	graph is robust to such perturbations. Adding (say) an Erd\H{o}s-R\'{e}nyi graph
	with $\Theta(n)$ edges to such a graph can only affect the spectral transitivity by a constant factor
	(by changing the Frobenius norm). \Thm{main-decomp} would only be affected by constant factors.
	Note that the spectral gap, on the other hand, can dramatically
	increase by such noise. 
	
	{\bf Spectral graph theory inspired by real-world graphs:} We consider \Thm{main-decomp}
	as opening up a new direction in spectral graph theory. At a mathematical level,
	\Thm{main-decomp} is like a Cheeger inequality, where a spectral condition implies
	a graph theoretic property. But all aspects of \Thm{main-decomp} (the notion of spectral
	transitivity and the properties of the decomposition) are inspired by the observed
properties of real-world graphs.

    \subsection{Perspectives on Community Detection} \label{sec:perspectives} 

    We list some connections between \Thm{main-decomp} and common issues in community detection.
    Note that \Thm{main-decomp} has no statistical
    assumption and does not reference any ground truth structure. 

   {\bf The number of communities:} A common challenge in many community detection methods is setting the number of communities,
    which is typically unknown ~\cite{FORTUNATO201075}. Moreover, most real-world networks have an extremely large number of communities  ~\cite{LLDM08, NR16}.
    A benefit of \Thm{main-decomp} is that $k$, the number of clusters, is not a parameter. Our algorithm
    that computes the decomposition has a single density parameter which is easy to set. Mathematically, it can be set 
    to $\stc$, a fixed function of the graph.

    {\bf The resolution limit:} Modularity is a classic objective used for community detection and the state-of-the-art Louvain
    method is based on optimizing this measure~\cite{New06}. Unfortunately, it is known to suffer from the ``resolution limit",
    and can sometimes miss communities below a size threshold~\cite{FB07}. We may be able to use \Thm{main-decomp} to derive
    explicit objectives that can avoid this limit.  \Thm{main-decomp} is not sensitive
    to the number of clusters/communities (as mentioned above) and only depends on a single global measure.
    We believe this could become one of the most important applications of \Thm{main-decomp}. 

    {\bf A statistical view:} \Thm{main-decomp} is a distribution-free statement, since it only involves
    a deterministic quantity, $\stc$. For any distribution that creates graphs with a constant $\tau$,
    \Thm{main-decomp} would be valid. It would be promising future work to see if one can use \Thm{main-decomp}
    to prove statistical conditions for feasibility of community detection.

    {\bf Requiring constant $\stc$:} \Thm{main-decomp} only gives non-trivial results when the $\stc$ is constant
    with respect to graph size, since there are $\poly(\stc)$ dependencies in all the bounds. Thus, it only
    deals triangle-rich settings, which does cover most social network applications. Indeed, in all our experiments,
    the real-world graphs have a large enough $\stc$ for getting results. We do note that there are numerous community detection
    settings where triangles might be too few, for example sparse Stochastic Block Model (SBM) settings, or sparse 
    planted partition models~\cite{CSX12}.

 \subsection{Empirical Evaluation} \label{sec:empirical}

We implement an algorithm that computes a spectral triadic decomposition of \Thm{main-decomp}. We
evaluate this algorithm on a variety of real-world datasets.
For context, we also compute
decompositions using a number of classic community detection/graph clustering methods: 
the Louvain algorithm~\cite{Louvain}, Infomap~\cite{infomap}, Label propagation~\cite{LabelProp}, and
$k$-way Spectral cuts~\cite{kway}. For a deeper comparison between methods, we also 
evaluated the algorithms on simple Stochastic Block Models (SBMs).
We focus on settings that create a large number of small, dense components. While this is
not the common setting, it is used in popular real-world network models like BTER~\cite{SeKoPi12}.

The details are given in \Sec{emp} and \Sec{other}.
We summarize our findings in \Fig{SBM1} and \Tab{noisesbmtab}, and in \Tab{comp-largest}. 

\begin{asparaitem}
    \item For all real-world datasets, the Spectral Triadic Decomposition outputs a large collection of dense clusters
        that cover a large fraction of the Frobenius norm and vertex set. For the simple SBM settings we experimented with,
        the Spectral Triadic Decomposition gets perfect recovery of the ground truth.
    \item In cases where we have vertex names (like coauthorship and citation networks), the spectral triadic clusters
        are semantically meaningful. We extract clusters of scientists in a subfield, or a collection of papers on a specific topic.
    \item The Infomap, Label Propagation, and $k$-way spectral clustering create extremely large, sparse clusters
        in these datasets. These clusters are not like communities, and often involve 50\% of all vertices. In SBM
         settings, the clusters are highly erroneous.
    \item The Louvain algorithm gives better results than the other algorithms, but also suffers from the problem
        of creating a few large clusters with low density. A typical cluster of Louvain is somewhat larger
        and less dense than a spectral triadic cluster. For SBMs, the Louvain algorithm make significant errors,
        but not as egregious as the other procedures.
    \item The semantically meaningful spectral triadic clusters are not discovered by Louvain (refer to \Sec{realsec}).
\end{asparaitem}
	
	\section{Related Work} \label{sec:related}
	Spectral graph theory is a deep field of study with much advancement over the past two decades. We refer the readers to the classic textbook by Chung \cite{Chung:1997}, and the tutorial \cite{SpielmanTutorial} and lecture notes \cite{SpielmanSAGT} by Spielman. 
	
	The cluster structure of real-world networks has attracted attention from the early days of network science~\cite{GiNe02,Newman03}.
	Fortunato's (somewhat dated) survey on community detection has details of the key results~\cite{FORTUNATO201075}. 
	There is no definitive model for social networks, but it is generally accepted that they have many dense clusters
	with sparse connections between them~\cite{CF06,LeLaDa08,SeKoPi12}. The study of triangles and neighborhood density
	goes back to the early days of social science theory~\cite{HoLe70,HL76,Burt04,Faust}. Early network science papers popularized the notion of clustering coefficients and transitivity as
	useful measures~\cite{WaSt98}. Some early work also use clustering coefficients to remove edges, but with the aim 
    of optimizing between metrics~\cite{Radicchi}.
	The use of triangles to find such clusters is a more recent development in network science. A number of contemporary
	results explicit use triangle information for algorithmic purposes~\cite{SaSePi+15,Ts15,BeGlLe16,TPM17}. Many of these results use
    triangle counts on edges as weights for either removal or insertion into communities.  Our main theorem is inspired by these applications. 
	
	While the Cheeger inequality by itself is not useful for real-world graph clustering, local versions
	of spectral clustering are extremely useful~\cite{ST08,AnChLa06}. We stress that the local versions do not
	relate the graph spectrum to the partitions but the algorithms bear striking similarities to the sweep cut procedure used to prove the Cheeger
	inequality. The approach is also central to the study of mixing times~\cite{LS88, JS89}. Alternatively, many results on the cluster structure of real-world graphs~\cite{LeLaDa08,GlSe12} use the Personalized PageRank method~\cite{AnChLa06}.
	Only a handful of local partitioning methods yield bounds on the internal structure of clusters~\cite{LeGh+14, kway, PSZ15}.
	
	Most relevant to our work is the result of Gupta, Roughgarden, and Seshadhri~\cite{GRS}. They prove a decomposition theorem
	for triangle-rich graphs, as measured by graph transitivity. Their main result shows that a triangle-dense graph
	can be clustered into dense clusters. The results of~\cite{GRS} do not have any spectral connection, nor 
	do they provide the kind of uniformity or coverage bounds of \Thm{main-decomp}. 
	Our main insight is in generalizations of their proof technique, which
	leads to connections with graph spectrum. We adapt the proof from ~\cite{GRS} to deal with normalized
	adjacency matrix, which adds many complications because of the non-uniformity of entries.

	\section{Preliminaries} \label{sec:prelims}
	
	We use $V, E, T$ to denote the sets of vertices, edges, and triangles of $G$, respectively.
	For any subgraph $H$ of $G$, we use $V_H, E_H, T_H$ to denote the corresponding
	sets within $H$. For any edge $e$, let $T_H(e)$ denote the 
	set of triangles in $H$ containing $e$.
	
	For any vertex $v$, let $d_v$ denote the degree of $v$ (in $G$). We stick to this notation for the rest of the article; unless specified otherwise, $d_v$ is never used to denote the degree in any induced subgraphs.
	
	We define \emph{weights} for edges and triangles. We will think of 
	edges and triangles as unordered sets of vertices. 
	
	\begin{definition} \label{def:weight} For any edge $e = (u,v)$, define the weight $\wgt(e)$
		to be $\frac{1}{d_u d_v }$. For any triangle $t = (u,v,w)$, define the weight $\wgt(t)$
		to be $\frac{1}{d_u d_v d_w}$.
		
		For any set $S$ consisting solely of edges or triangles, define $\wgt(S) = \sum_{s \in S} \wgt(s)$.
	\end{definition}

	Let $S \subseteq V$ be a subset of vertices, and let $\cA|_S$ denote the submatrix
	of $\cA$ restricted to $S$. We use $\lambda_i(S)$ to denote the $i$th largest
	eigenvalue of the symmetric submatrix $\cA|_S$. Abusing notation, we use $E_S$ and $T_S$
	to denote the edges and triangles contained in the graph induced on $S$.

	We state some standard facts that relate the sum of weights to sum of eigenvalue powers.

    \begin{claim} \label{clm:frob} $\sum_{i \leq |S|} \lambda^2_i(S) = 2 \sum_{e \in E(S)} \wgt(e)$
	\end{claim}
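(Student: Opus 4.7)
The plan is to prove this via the standard identity equating the squared Frobenius norm of a symmetric matrix with the sum of squares of its eigenvalues, and then expand the Frobenius norm entrywise using the definition of the normalized adjacency matrix.

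First I would invoke the spectral theorem: the submatrix $\cA|_S$ is real symmetric (it is a principal submatrix of the symmetric matrix $\cA$), so there is an orthogonal decomposition with real eigenvalues $\lambda_1(S) \geq \cdots \geq \lambda_{|S|}(S)$. By rotational invariance of the Frobenius norm (equivalently, because $\mathrm{tr}(\cA|_S^2)$ is basis-independent), we have
\[
\sum_{i \leq |S|} \lambda_i(S)^2 \;=\; \|\cA|_S\|_2^2 \;=\; \sum_{u,v \in S} (\cA|_S)_{uv}^2.
\]

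Next I would unpack the right-hand side using the definition of the normalized adjacency matrix restricted to $S$. The diagonal entries vanish, so only pairs $u \neq v$ contribute, and $(\cA|_S)_{uv}$ is $1/\sqrt{d_u d_v}$ exactly when $(u,v)$ is an edge of $G$ with both endpoints in $S$ (i.e., an edge of $E(S)$), and $0$ otherwise. Each edge $e = (u,v) \in E(S)$ therefore contributes to two ordered pairs $(u,v)$ and $(v,u)$, each with squared value $1/(d_u d_v) = \wgt(e)$. Summing yields
\[
\sum_{u,v \in S} (\cA|_S)_{uv}^2 \;=\; 2 \sum_{e \in E(S)} \wgt(e),
\]
which chained with the previous display gives the claim.

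The proof is essentially a bookkeeping exercise, and there is no real obstacle: the only point one has to be slightly careful about is that the weight $\wgt(e) = 1/(d_u d_v)$ uses the degrees $d_u, d_v$ in the original graph $G$, not in the induced subgraph on $S$, which matches exactly the entries of $\cA|_S$ since $\cA|_S$ is a principal submatrix of $\cA$ (so the off-diagonal entries still use the global degrees). Once that is observed, the identity follows in two lines.
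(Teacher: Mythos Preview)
Your proof is correct and follows essentially the same approach as the paper: both use the Frobenius norm identity $\sum_i \lambda_i(S)^2 = \sum_{u,v \in S} (\cA|_S)_{uv}^2$ and then expand entrywise, picking up the factor of $2$ because each edge is counted as two ordered pairs. Your extra remark that the degrees in $\wgt(e)$ are the global degrees (matching the entries of the principal submatrix $\cA|_S$) is a useful clarification the paper leaves implicit.
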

	
	\begin{proof} By the properties of the Frobenius norm of matrices, $\sum_{i \leq |S|} \lambda^2_i = \sum_{s,t \in S} \cA^2_{st}.$
		Note that $\cA_{st} = A_{st}/\sqrt{d_sd_t}$. Hence, $\sum_{s,t} \cA^2_{s,t} = 2 \sum_{e = (u,v) \in E(S)} 1/d_ud_v$.
		(We get a $2$-factor because each edge $(u,v)$ appears twice in the adjacency matrix.)
	\end{proof}
	
	\begin{claim} \label{clm:tri-wgt} $\sum_{i \leq |S|} \lambda^3_i(S) = 6\sum_{t \in T(S)} \wgt(t)$.
	\end{claim}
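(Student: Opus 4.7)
The plan is to use the trace identity $\sum_{i \leq |S|} \lambda_i^3(S) = \mathrm{tr}((\cA|_S)^3)$, which is valid for any symmetric matrix, and then expand the cube combinatorially. Specifically, I would write
$$\mathrm{tr}((\cA|_S)^3) = \sum_{a,b,c \in S} (\cA|_S)_{ab} (\cA|_S)_{bc} (\cA|_S)_{ca},$$
and recall that each entry of $\cA|_S$ equals the corresponding entry of $\cA$, namely $1/\sqrt{d_x d_y}$ when $(x,y)$ is an edge of $G$ and zero otherwise.

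Next, I would identify which ordered triples $(a,b,c) \in S^3$ give a nonzero contribution. A nonzero term requires all three entries $\cA_{ab}, \cA_{bc}, \cA_{ca}$ to be nonzero, which means $(a,b), (b,c), (c,a)$ must all be edges of $G$ with endpoints in $S$. Because $\cA$ has zero diagonal, no two of $a,b,c$ can coincide, so the set $\{a,b,c\}$ is an unordered triangle in the induced subgraph on $S$, i.e., an element of $T(S)$. Conversely, every triangle in $T(S)$ produces such a triple.

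For a fixed triangle $t = \{u,v,w\} \in T(S)$, the product evaluates to
$$\cA_{uv}\cA_{vw}\cA_{wu} = \frac{1}{\sqrt{d_u d_v}} \cdot \frac{1}{\sqrt{d_v d_w}} \cdot \frac{1}{\sqrt{d_w d_u}} = \frac{1}{d_u d_v d_w} = \wgt(t),$$
and this value is symmetric in its arguments. Since there are $3! = 6$ ordered triples $(a,b,c)$ corresponding to each unordered triangle, summing over all of $S^3$ collects a factor of $6 \cdot \wgt(t)$ per triangle. Combining the pieces yields
$$\sum_{i \leq |S|} \lambda_i^3(S) = \mathrm{tr}((\cA|_S)^3) = 6 \sum_{t \in T(S)} \wgt(t),$$
as claimed.

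There is no real obstacle here; the proof is a direct parallel of the argument for \Clm{frob}, which handles $\mathrm{tr}((\cA|_S)^2)$ and picks up a factor of $2$ because each edge contributes as $(u,v)$ and $(v,u)$. The only point worth spelling out carefully is that degenerate triples (with a repeated index) contribute nothing, which is immediate from the zero diagonal of $\cA$, and that the $6$-fold count follows from the full symmetry of the product $1/(d_u d_v d_w)$ under permutation of the vertex labels.
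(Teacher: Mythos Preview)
Your proof is correct and follows essentially the same approach as the paper: both use the trace identity $\sum_i \lambda_i^3(S) = \mathrm{tr}((\cA|_S)^3)$, expand combinatorially, identify nonzero terms with triangles in $T(S)$, and count orderings. The paper computes each diagonal entry $(\cA|_S)^3_{ii} = 2\sum_{t \ni i}\wgt(t)$ and then sums over $i$ for the extra factor of $3$, whereas you sum directly over ordered triples for a single factor of $6$; the content is identical.
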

	
	\begin{proof} Note that $\sum_{i \leq |S|} \lambda^3_i(S)$ is the trace of $(\cA|_S)^3$. The diagonal entry
		$(\cA|_S)^3_{ii}$ is precisely $\sum_{s \in S} \sum_{s' \in S} \cA_{is} \cA_{ss'} \cA_{s'i}$. 
		Note that $\cA_{is} \cA_{ss'} \cA_{s'i}$ is non-zero iff $(i,s,s')$ form a triangle. In that case,
		$\cA_{is} \cA_{ss'} \cA_{s'i} = 1/\sqrt{d_id_s} \cdot 1/\sqrt{d_sd_{s'}} \cdot 1/\sqrt{d_{s'}d_i} = \wgt((i,s,s'))$.
		We conclude that $(\cA|_S)^3_{ii}$ is $2 \sum_{t \in T(S), t \ni i} \wgt(t)$.
		(There is a $2$ factor because every triangle is counted twice.)
		
		Thus, $\sum_{i \leq n} \lambda^3_i(S) = \sum_i 2 \sum_{t \in T, t \ni i} \wgt(t) = 2 \sum_{t \in T} \sum_{i \in t} \wgt(t)
		= 6\sum_{t \in T} \wgt(t)$. (The final $3$ factor appears because a triangle contains exactly $3$ vertices.)
	\end{proof}
	
	\begin{claim} \label{clm:tri-frob} $\sum_{t \in T(S)} \wgt(t) \leq \|\cA|_S\|^2_F/6$.
	\end{claim}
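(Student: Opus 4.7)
The plan is to prove this by combining \Clm{tri-wgt} with a simple eigenvalue-comparison argument. By \Clm{tri-wgt}, we have $6 \sum_{t \in T(S)} \wgt(t) = \sum_{i \leq |S|} \lambda_i^3(S)$, and by definition of the Frobenius norm, $\|\cA|_S\|_2^2 = \sum_{i \leq |S|} \lambda_i^2(S)$. So it suffices to show the inequality $\sum_i \lambda_i^3(S) \leq \sum_i \lambda_i^2(S)$.

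The key observation I would invoke is that every eigenvalue of $\cA|_S$ lies in $[-1,1]$. For the full matrix $\cA$, this is standard: $\cA = D^{-1/2} A D^{-1/2}$ is the symmetrically normalized adjacency matrix, whose spectrum is contained in $[-1,1]$ with largest eigenvalue exactly $1$. Since $\cA|_S$ is a principal submatrix of $\cA$, Cauchy's interlacing theorem gives $\lambda_{|S|}(\cA) \leq \lambda_i(S) \leq \lambda_1(\cA)$ for every $i$, and hence $|\lambda_i(S)| \leq 1$.

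Now for each $i$, since $\lambda_i(S) \leq 1$, we have $\lambda_i^2(S)(1 - \lambda_i(S)) \geq 0$, which rearranges to $\lambda_i^3(S) \leq \lambda_i^2(S)$. Summing over $i$ yields $\sum_i \lambda_i^3(S) \leq \sum_i \lambda_i^2(S)$, and dividing by $6$ and substituting back gives
\[
\sum_{t \in T(S)} \wgt(t) = \tfrac{1}{6} \sum_i \lambda_i^3(S) \leq \tfrac{1}{6} \sum_i \lambda_i^2(S) = \|\cA|_S\|_2^2/6,
\]
as desired.

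There is essentially no obstacle here: the only nontrivial input is the spectral bound $|\lambda_i(\cA|_S)| \leq 1$, which is immediate from Cauchy interlacing applied to the normalized adjacency matrix. The bookkeeping of factors of $6$ is handled cleanly by \Clm{tri-wgt}.
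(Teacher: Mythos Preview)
Your proof is correct and follows essentially the same approach as the paper: invoke \Clm{tri-wgt}, use Cauchy interlacing to bound the eigenvalues of $\cA|_S$ by $1$, and conclude $\sum_i \lambda_i^3(S) \leq \sum_i \lambda_i^2(S) = \|\cA|_S\|_2^2$. The paper's version is terser (it just notes $\lambda_1(S) \leq 1$ and jumps to the inequality), but the argument is the same.
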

	
	\begin{proof} By \Clm{tri-wgt} $\sum_{t \in T(S)} \wgt(t) = \sum_{i \leq |S|}\lambda^3_i(S)/6$. The maximum eigenvalue
		of $\cA$ is $1$, and since $\cA|_S$ is a submatrix, $\lambda_1(S) \leq 1$ (Cauchy's interlacing theorem). 
		Thus, $\sum_{i \leq |S|} \lambda^3_i(S) \leq \sum_{i \leq |S|}\lambda^2_i(S) = \|\cA|_S\|^2_F$. 
	\end{proof}
	
	As a direct consequence of the previous claims applied on $\cA$, we get the following characterization of the spectral triadic content
	in terms of the weights.
	
	\begin{lemma} \label{lem:stc} $\stc = \frac{3 \sum_{t \in T} \wgt(t)}{\sum_{e \in E} \wgt(e)}$.
	\end{lemma}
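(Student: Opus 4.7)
The plan is to obtain Lemma~\ref{lem:stc} as a direct two-line computation from the previously established claims, specialized to the full vertex set $S = V$. Since $\cA|_V = \cA$, the eigenvalues $\lambda_i(V)$ agree with the eigenvalues $\lambda_i$ of the normalized adjacency matrix used in the definition \eqref{eq:stc} of $\stc$, and the induced edge/triangle sets $E_V$ and $T_V$ are simply $E$ and $T$.

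First I would apply Claim~\ref{clm:frob} with $S = V$ to rewrite the denominator of \eqref{eq:stc}, obtaining $\sum_{i \leq n} \lambda_i^2 = 2 \sum_{e \in E} \wgt(e)$. Next I would apply Claim~\ref{clm:tri-wgt}, again with $S = V$, to rewrite the numerator as $\sum_{i \leq n} \lambda_i^3 = 6 \sum_{t \in T} \wgt(t)$. Substituting both into the definition
\[
\stc \;=\; \frac{\sum_{i \leq n} \lambda_i^3}{\sum_{i \leq n} \lambda_i^2}
\]
gives $\stc = \frac{6 \sum_{t \in T} \wgt(t)}{2 \sum_{e \in E} \wgt(e)}$, and the numerical factors simplify to the desired $\frac{3 \sum_{t \in T} \wgt(t)}{\sum_{e \in E} \wgt(e)}$.

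There is essentially no obstacle here: the only minor point worth being explicit about is that the claims were stated for arbitrary $S \subseteq V$, so I need to note that the ``$S = V$'' specialization recovers the global spectrum (no interlacing loss) and the global edge/triangle sets. Once that identification is made, the lemma is a one-step consequence of Claims~\ref{clm:frob} and \ref{clm:tri-wgt}, with no further inequalities required.
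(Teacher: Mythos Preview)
Your proposal is correct and matches the paper's approach exactly: the paper states this lemma as ``a direct consequence of the previous claims applied on $\cA$'' without writing out a proof, and your argument specializing Claims~\ref{clm:frob} and~\ref{clm:tri-wgt} to $S=V$ is precisely what is intended.
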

	
	While the following bound is not necessary for our main result, it is instructive 
	to see the largest possible value of the spectral transitivity.
	
	\begin{lemma} \label{lem:ratio-basic}  Consider normalized adjacency matrices $\cA$
		with $n$ vertices. The maximum value of $\stc(\cA)$
		is $1 - 1/(n-1)$. This value is attained for the unique strongly $1$-uniform matrix, 
		the normalized adjacency matrix of the $n$-clique.
	\end{lemma}
	
	\begin{proof} First, consider the normalized adjacency matrix $\cA$ of the $n$-clique.
		All off-diagonal entries are precisely $1/(n-1)$ and $\cA$ can be expressed
		as $(n-1)^{-1}(\bone\bone^T - I)$. The matrix $\cA$ is $1$-regular. The largest eigenvalue is $1$ and all the remaining eigenvalues
		are $-1/(n-1)$. Hence, $\sum_i \lambda^3_i = 1 - (n-1)/(n-1)^3 = 1 - 1/(n-1)^2$.
		The sum of squares of eigenvalue is $\sum_i \lambda^2_i = 1 + (n-1)/(n-1)^2 = 1 + 1/(n-1)$.
		Dividing, 
		$$\frac{\sum_{i \leq n} \lambda^3_i}{\sum_{i \leq n} \lambda^2_i} = 1 - 1/(n-1).$$
		Since the matrix has zero diagonal, the trace $\sum_i \lambda_i$ is zero.
		We will now prove the following claim.
		
		\begin{claim} Consider any sequence of numbers $1 = \lambda_1 \geq \lambda_2 \ldots \geq \lambda_n$
			such that $\forall i, |\lambda_i| \leq 1$ and $\sum_i \lambda_i = 0$. If
			$\sum_i \lambda^3_i \geq (1-1/(n-1)) \sum_i \lambda^2_i$, then $\forall i > 1, \lambda_i = -1/(n-1)$.
		\end{claim}
		
		\begin{proof} Let us begin with some basic manipulations.
			\begin{align}
				\sum_i \lambda^3_i & \geq [1 - 1/(n-1)] \sum_i \lambda^2_i \\ \  \Longrightarrow 1 + \sum_{i > 1} \lambda^3_i &\geq [1 - 1/(n-1)] \cdot (1 + \sum_{i > 1} \lambda^2_i) \nonumber \\
				\Longrightarrow \sum_{i > 1} \lambda^3_i &\geq [1-1/(n-1)] \sum_{i > 1} \lambda^2_i - 1/(n-1). \label{eq:sumeigen}
			\end{align}
			For $i > 1$, define $\delta_i \eqdef \lambda_i + 1/(n-1)$. Note that $\sum_{i > 1} \lambda_i = -1$,
			so $\sum_{i > 1} \delta_i = 0$. Moreover, $\forall i > 1$, $\delta_i \leq 1 + 1/(n-1)$. 
			We plug in $\lambda_i = \delta_i - 1/(n-1)$ in \Eqn{sumeigen}.
			\begin{align*}
				& \sum_{i > 1} \Big[\delta_i - 1/(n-1)\Big]^3 \geq [1-1/(n-1)] \sum_{i > 1} \Big[\delta_i - 1/(n-1)\Big]^2 - 1/(n-1) \\
				\Longrightarrow & \sum_{i > 1} \Big[\delta^3_i - 3\delta^2_i/(n-1) + 3\delta_i/(n-1)^2 - 1/(n-1)^3\Big] \\& \geq [1-1/(n-1)] \sum_{i > 1} \Big[\delta^2_i - 2\delta_i/(n-1)+ 1/(n-1)^2\Big] - 1/(n-1).
			\end{align*}
			Recall that $\sum_{i > 1} \delta_i = 0$. Hence, we can simplify the above inequality.
			\begin{align*}
				& \sum_{i > 1} \delta^3_i - (3/(n-1)) \sum_{i > 1} \delta^2_i - 1/(n-1)^2 \\& \geq [1-1/(n-1)] \sum_{i > 1} \delta^2_i + 1/(n-1) - 1/(n-1)^2 - 1/(n-1) \\
				& \Longrightarrow \sum_{i > 1} \delta^3_i \geq [1+2/(n-1)] \sum_{i > 1} \delta^2_i. \ \ \ \ \textrm{(Canceling terms and rearranging)}
			\end{align*}
			Since $\delta_i \leq (1+1/(n-1))$, we get that $\sum_{i > 1}\delta^3_i \leq [1+1/(n-1)] \sum_{i > 1} \delta^2_i$.
			Combining with the above inequality, we deduce that $[1+2/(n-1)] \sum_{i > 1} \delta^2_i \leq [1+1/(n-1)] \sum_{i > 1} \delta^2_i$.
			This can only happen if $\sum_{i > 1} \delta^2_i$ is zero, implying all $\delta_i$ values are zero. Hence,
			for all $i > 1$, $\lambda_i = -1/(n-1)$.
		\end{proof}
		With this claim, we conclude that any matrix $\cA$ maximizing the ratio of cubes and squares of eigenvalues
		has a fixed spectrum. It remains to prove that a unique normalized adjacency matrix has this spectrum.
		We use the rotational invariance of the Frobenius norm: sum of squares of entries of $\cA$
		is the same as the sum of squares of eigenvalues. Thus,
		\begin{equation}
			\sum_{(u,v) \in E} \frac{2}{d_ud_v} = 1 + \dfrac{ 1}{n-1} = \dfrac{n}{n-1}. \label{eq:clique-frob}
		\end{equation}
		Observe that $\frac{2}{d_ud_v} \geq 1/(d_u(n-1)) + 1/(d_v(n-1))$, since all degrees
		are at most $n-1$. Summing this inequality over all edges,
		\begin{equation}
			\sum_{(u,v) \in E} \frac{2}{d_u d_v} \geq \sum_{v \in V} \sum_{u \in N(v)} \frac{1}{d_v(n-1)} = \sum_{v \in V} \frac{d_v}{d_v(n-1)} = \dfrac{n}{n-1}.
		\end{equation}
		Hence, for \Eqn{clique-frob} to hold, for all edges $(u,v)$, we must have the equality
		$\frac{2}{d_ud_v} = 1/(d_u(n-1)) + 1/(d_v(n-1))$. That implies that for all edge $(u,v)$,
		$d_u = d_v = n-1$. So all vertices have degree $(n-1)$, and the graph is an $n$-clique.
	\end{proof}

	We will need the following ``reverse Markov inequality" for some intermediate proofs. It is easily derivable from first principles. The standard Markov inequality gives us an upper bound, whereas here we derive a similar lower bound for non-negative random variables.
	
	\begin{lemma} \label{lem:rev-markov} Consider a random variable $Z$ taking values in $[0,b]$.
		If $\EX[Z] \geq \sigma b$, then $\Pr[Z \geq \sigma b/2] \geq \sigma/2$.
	\end{lemma}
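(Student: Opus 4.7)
The plan is to prove Lemma \ref{lem:rev-markov} by a standard ``truncation'' argument, splitting the expectation according to whether $Z$ falls above or below the threshold $\sigma b/2$. The setup is small enough that no clever machinery is needed, but the argument is essentially the only way to bound a lower tail from a mean bound when the variable is bounded from above.

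First I would write $\EX[Z] = \EX[Z \cdot \mathbb{1}\{Z < \sigma b/2\}] + \EX[Z \cdot \mathbb{1}\{Z \geq \sigma b/2\}]$. On the first term, I would use the trivial bound $Z < \sigma b/2$ on the event in question, giving at most $(\sigma b/2) \cdot \Pr[Z < \sigma b/2] \leq \sigma b/2$. On the second term, I would use the upper bound $Z \leq b$, giving at most $b \cdot \Pr[Z \geq \sigma b/2]$. Combining with the hypothesis $\EX[Z] \geq \sigma b$ yields
\begin{equation*}
\sigma b \;\leq\; \sigma b/2 + b \cdot \Pr[Z \geq \sigma b/2],
\end{equation*}
and rearranging gives $\Pr[Z \geq \sigma b/2] \geq \sigma/2$, as desired.

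There is no real obstacle here; the only thing to be slightly careful about is that both upper bounds used (the threshold bound and the ambient upper bound $b$) are being applied on disjoint events, so no factor is lost. The lemma is essentially the contrapositive of Markov's inequality applied to $b - Z$, and I expect it to serve in the later proof as a way to convert ``the average of $d_w^{-1}$ over $P(u)$ is $\Omega(d_v^{-1})$''-type statements into ``a constant fraction of $P(u)$ consists of vertices with $d_w = O(d_v)$,'' as foreshadowed in the proof sketch in Section~\ref{sec:highlevel}.
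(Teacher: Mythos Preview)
Your proof is correct and is essentially identical to the paper's own argument: the paper also splits $\EX[Z]$ according to whether $Z$ lies above or below $\sigma b/2$, bounds the low part by $\sigma b/2$ and the high part by $b\cdot\Pr[Z\geq \sigma b/2]$, and rearranges. The only cosmetic difference is that the paper phrases the split via conditional expectations rather than indicators.
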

	
	\begin{proof} In the following calculations, we will upper bound the conditional expectation
		by the maximum value (under that condition).
		\begin{align}
			\sigma b \leq \EX[Z] &= \Pr[Z \geq \sigma b/2] \cdot \EX[Z | Z \geq \sigma b/2] + \Pr[Z \leq \sigma b/2] \cdot \EX[Z | Z \leq \sigma b/2]
			\\ &\leq \Pr[Z \geq \sigma b/2] \cdot b + \sigma b/2
		\end{align}
		We rearrange to complete the proof.
	\end{proof}

\section{The Decomposition Procedure} \label{sec:decomp}

The proof of our main result \Thm{main-decomp} is constructive. We begin with a description of a procedure that outputs the desired decomposition.
The procedure performs an interlacing of ``cleaning" and ``extraction" operations.

There is a single parameter $\eps$, which will be set to $\stc/6$ for the proof of \Thm{main-decomp}. As discussed later,
in implementation we set $\eps$ to some default value.

\begin{algorithm}[ht]
	\caption{\decompose$(G,\eps)$}
	\label{alg:decompose}
	\begin{algorithmic}[1]
		\State Initialize $\bX$ to be an empty family of sets, and initialize subgraph $H = G$.
		\While{$H$ is non-empty} 
		\While{$H$ is not clean}
		\State Remove any edge $e \in E_H$ from $H$ such that $\wgt(T_H(e)) < \stch \cdot \wgt(e)$. \label{step:clean}
		\EndWhile
		\State Add output \extract$(H, \eps)$ to $\bX$.
		\State Remove these vertices from $H$.
		\EndWhile
		\State Output $\bX$.
	\end{algorithmic}
\end{algorithm}

\begin{algorithm}[ht]
	\caption{\extract$(H, \eps)$}
	\label{alg:extract}
	\begin{algorithmic}[1]
		\State Pick $v\in V_H$ that minimizes $d_v$.
		\State Construct the set $L \eqdef \{u | (u,v) \in E_H, d_u \leq 2\stch^{-1}d_v\}$ ($L$ is the set of low degree neighbors of $v$ in $H$.)
		\State For every vertex $w \in V_H$, define $\rho_w$ to be the total weight of triangles of the form $(w,u,u')$ where $u, u' \in L$.
		\State Sort the vertices in decreasing order of $\rho_w$, and construct the ``sweep cut" $C$ to be
		the smallest set satisfying $\sum_{w \in C} \rho_w \geq (1/2) \sum_{w \in V_H} \rho_w$.
		\State Output $X \eqdef \{v\} \cup L \cup C$.
	\end{algorithmic}
\end{algorithm}

A central notion to understand the above procedure is the notion of a clean graph.

\begin{definition} \label{def:clean} 
A connected subgraph $H$ is called \emph{clean} if $\forall e \in E_H$, $\wgt(T_H(e)) \geq \stch{} \wgt(e)$.
\end{definition} 

In the \decompose{} procedure, we repeatedly remove edges until the remaining graph is clean. Note that
we do not specify any ordering on this removal; all we need is for the resulting $H$ to be clean.
We then call \extract$(H, \eps)$ to remove a single set $X$, which forms one of the sets in \Thm{main-decomp}.
This process is iterated until the graph is empty.

There are two main challenges in the proof. First, we need to argue that $\cA|_X$ is strongly uniform,
for every $X$ extracted. This corresponds to cluster structure in \Thm{main-decomp}. Second, we have to argue that the ``damage" 
done by the extractions and cleaning is limited. This corresponds to coverage in \Thm{main-decomp}. 
When $X$ is extracted, the incident edges (and triangles) not contained in $X$ are effectively deleted.
When an edge is cleaned in \Step{clean}, it is removed from the graph. The corresponding Frobenius norm is lost,
and hence we need to upper bound this loss.

The most difficult part of the proof is to understand the effects of extraction.
We can prove that, roughly speaking, the triangle weight extracted in $X$
is proportional to the total triangle weight incident to $X$. This proof crucially uses the fact
that the current graph $H$ (where \extract$(H,\eps)$ is called) is clean. 

By itself, the above bound could be trivially obtained, simply by having $X$ be the whole vertex set.
The challenge is to also prove that $\cA|_X$ is (strongly) uniform. Observe that $X$ is a set
of radius $2$, since the extraction start at vertex $v$, take a low-degree neighborhood $L$,
and then takes some vertices $w$ that form many triangles involving $L$. Hence, any such $w$
is a neighbor of $L$ and at most distance $2$ from the starting vertex $v$. We need to argue
that this distance $2$ set of vertices is small, so the triangle weight extracted/contained in $X$
is inside a small set of vertices.

The above arguments about the \extract{} procedure are contained in \Sec{extract}. In \Sec{wrapup}, we also
bound the losses from cleaning and wrap up the entire proof. Those calculations are much simpler.

\section{Cleaned graphs and extraction} \label{sec:extract}

Consider a subgraph $H$ that is connected clean, and let $X$ denote the output of \extract$(H,\eps)$.
(Recall that $\stch$ is set to $\stc/6$.) The main theorem of this section follows.
We do not attempt to optimize the polynomials in $\stch$, and note that the dependence may be much
better than the bounds stated.

\begin{theorem} \label{thm:extract} 
    $$ \sum_{t \in T_H, t \subseteq X} \wgt(t) \geq \poly(\stch) \sum_{t \in T_H, t \cap X \neq \emptyset} \wgt(t)$$
	(The triangle weight contained inside $X$ is a constant fraction of the triangle weight incident to $X$.)
	
    Moreover, $\cA|_X$ is strongly $\poly(\stch)$-uniform.
\end{theorem}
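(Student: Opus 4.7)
The plan is to follow the strategy sketched in \Sec{highlevel}, carefully tracking the $\stch$-dependence at every step so that the final exponents come out as $\stch^8$ and $\stch^{12}$; the three stages of the algorithm (build $L$, build $C$, verify uniformity) correspond to the three main blocks of the proof.

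First I would analyze the set $L$. Applying cleanliness of $H$ to the edge $(u,v)$ for any neighbor $u$ of $v$ in $H$ yields $\sum_{w \in P(u)} d_w^{-1} \geq \stch$, where $P(u)$ is the set of triangle partners of $(u,v)$ in $H$; since $v$ has the minimum degree in $H$, each summand is at most $d_v^{-1}$, so \Lem{rev-markov} extracts $\Omega(\stch d_v)$ partners of degree at most $2\stch^{-1} d_v$ --- exactly $L$. Repeating the same averaging argument inside $L$ and summing as in \Eqn{L-weight} produces $\Omega(\stch^{O(1)} d_v^2)$ edges of weight $\Omega(d_v^{-2})$ inside $L$, establishing uniformity of $\cA|_L$.

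Second I would bound $\sum_w \rho_w$ from below and $\sum_w \sqrt{\rho_w}$ from above. The lower bound $\sum_w \rho_w = \Omega(\stch^{O(1)})$ follows from \Eqn{sum-rho} combined with the edge count just obtained. The upper bound uses the elementary inequality $\rho_w \leq d_v^{-3} c_w^2$, with $c_w = |N_H(w) \cap L|$, together with $\sum_w c_w \leq \sum_{u \in L} d_u = O(d_v^2/\stch)$, to give $\sum_w \sqrt{\rho_w} = O(\sqrt{d_v}/\stch^{O(1)})$. An averaging argument then forces $|C| = O(d_v)$ with every $w \in C$ satisfying $\rho_w = \Omega(\stch^{O(1)}/d_v)$, so $|X| = O(d_v)$ while the contained triangle weight is at least $\tfrac12 \sum_w \rho_w = \Omega(\stch^{O(1)})$. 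Combined with the per-vertex incidence bound of $d_v^{-1}$ (from \Eqn{edge-wt} and \Eqn{tri-wt}), the triangle weight incident to $X$ is $O(1)$, so the contained-to-incident ratio is $\Omega(\stch^{O(1)})$, which after tracking constants gives the stated $\stch^8/2000$.

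For strong uniformity of $\cA|_X$, a second application of \Lem{rev-markov} shows that an $\Omega(\stch^{O(1)})$-fraction of vertices $u \in X$ each contain $\Omega(\stch^{O(1)}/d_v)$ triangle weight inside $X$; since every triangle weighs at most $d_v^{-3}$, such a $u$ lies in $\Omega(\stch^{O(1)} d_v^2)$ intra-$X$ triangles and hence has $\Omega(\stch^{O(1)} d_v)$ intra-$X$ neighbors joined by edges of weight $\Theta(d_v^{-2})$, yielding uniformity of $\cA|_{N(u) \cap X}$. Combined with the outer uniformity of $\cA|_X$, this is exactly strong $\delta \stch^{12}$-uniformity. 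The main obstacle is the $\stch$-bookkeeping: each reverse-Markov and each pigeonhole step costs a polynomial factor, and the $\stch^{12}$ target leaves little slack; in particular, driving $|C|$ down to $O(d_v)$ (rather than polynomially larger) requires the \emph{quantitative} lower bound $|L| = \Omega(\stch d_v)$ from the first stage, not merely a qualitative one --- without it the sweep cut balloons and destroys uniformity of $\cA|_X$.
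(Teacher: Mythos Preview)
Your plan is essentially the paper's proof: it establishes exactly the same chain of claims (\Lem{recip}, \Clm{low-deg}, \Clm{nbd-wgt}, \Clm{rho-bound}, \Clm{rho-sqrt}, \Clm{C}, \Clm{tri-wgt-vertex}) in the same order, and assembles them the same way for both the ratio bound and the strong-uniformity argument.

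One correction to your diagnosis of where the difficulty lies. The bound $|C| = O(\stch^{-5} d_v)$ does \emph{not} use the lower bound $|L| = \Omega(\stch d_v)$; the sweep-cut argument (\Clm{C}) combines only $\sum_w \rho_w \geq \stch^3/8$ with $\sum_w \sqrt{\rho_w} \leq 2\stch^{-1}\sqrt{d_v}$, and the latter relies on the \emph{upper} bound $|L| \leq d_v$. Where the lower bound $|X| \geq |L| = \Omega(\stch d_v)$ actually matters is in verifying the uniformity definition itself: entries must lie in $[\alpha/(|X|-1),\, 1/(\alpha(|X|-1))]$, and since every edge weight in $H$ is at most $d_v^{-2}$, the upper endpoint $1/(\alpha(|X|-1))$ must dominate $d_v^{-2}$, which forces $|X| = \Omega(\poly(\stch)\, d_v)$. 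So the quantitative lower bound on $|L|$ feeds into the uniformity constant, not into the size of the sweep cut.
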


We use $v$, $L$, and $C$ as defined in \extract$(H,\eps)$. We use $N(v)$ (or $N$ when the context is clear) to denote the set of neighbors of a vertex $v$.
The proof of this theorem contains numerous parts. We first state useful lemmas that bound
edge/triangle weights incident to $L$ and $C$. 
These lemmas contain the core calculations that bound the total triangle weight incident to
the extracted set, and upper bound the size of the extracted set. From the various bounds
of these two subsections, the final proof of \Thm{extract} follows with some calculations.

Our first step is prove that the total edge weight contained in $L$ is sufficiently large.

\begin{lemma} \label{lem:nbd-wgt} $\sum_{e \in E_H, e \subseteq L} \wgt(e) \geq \stch^2/8$.
\end{lemma}

\begin{proof} For any vertex $u \in N$, we define the set of partners $P(u)$ to be
$\{w: (u,v,w) \in T_H\}$. 
We first prove the bound 
    \begin{equation} \label{eq:recip}
    \sum_{w \in P(u) \cap L} d^{-1}_w \geq \stch/2
\end{equation}
Let $e = (u,v)$. Since $H$ is clean, $\wgt(T_H(e)) \geq \stch \wgt(e)$.
Expanding out the definition of weights,
\begin{equation} \label{eq:partner}
	\sum_{w: (u,v,w) \in T_H} \frac{1}{d_u d_v d_w} \geq \frac{\stch}{d_ud_v} \ \ \ \Longrightarrow \ \ \ \sum_{w \in P(u)} {d^{-1}_w} \geq \stch. 
\end{equation}
Note that $L$ (as constructed in \extract$(H)$) is the subset
of $N$ consisting of vertices with degree at most $2\stch^{-1}d_v$. 
For $w \in N \setminus L$, we have the lower bound $d_w \geq 2\stch^{-1}d_v$. Hence,
\begin{equation} \label{eq:highdeg}
	\sum_{w \in N \setminus L} d^{-1}_w \leq |N \setminus L| (\stch/2) d^{-1}_v \leq d_v \times (\stch/2) d^{-1}_v = \stch/2.
\end{equation}

In the calculation below, we split the sum of \Eqn{partner} into the contribution from $L$ and from outside $L$.
We apply \Eqn{highdeg} to bound the latter contribution.
\begin{equation}
	\stch \leq \sum_{w \in P(u)} d^{-1}_w \leq \sum_{w \in P(u) \cap L} d^{-1}_w + \sum_{w \in N \setminus L} d^{-1}_w \leq \sum_{w \in P(u) \cap L} d^{-1}_w + \stch/2.
\end{equation}

By rearranging, we prove \Eqn{recip}.

We now use \Eqn{recip} to complete the proof.
By \Eqn{recip}, $\forall w \in L$, $\sum_{w' \in P(w) \cap L} d^{-1}_{w'} \geq \stch/2$. 
We multiply both sides by $d^{-1}_{w}$ and sum over all $w \in L$. 
\begin{equation}
	\sum_{w \in L} \sum_{w' \in P(w) \cap L} (d_w d_{w'})^{-1} \geq (\stch/2) \sum_{w \in L} d^{-1}_{w}.
\end{equation}
By \Eqn{recip}, $\sum_{w \in L} d^{-1}_{w} \geq \stch/2$. Note that $w' \in P(w)$ only if $(w,w') \in E_H$.
Hence, 

\noindent
$\sum_{w \in L} \sum_{w' \in L, (w,w') \in E_H} \wgt((w,w')) \geq \stch^2/4$. Note that the summation
counts all edges twice, so we divide by $2$ to complete the proof.
\end{proof}

Our next step is to bound the total triangle weight of triangles that involve
at least two vertices of $L$.  Recall, from the description of \extract, 
that $\rho_w$ is the total triangle weight of the triangles $(w,u,u')$, where $u,u' \in L$.
We will prove that $\sum_w \rho_w$ is large, which is quite 
easy using the bound of \Lem{nbd-wgt}.
	
\begin{claim} \label{clm:rho-bound} $\sum_{w \in V_H} \rho_w \geq \stch^3/8$.
\end{claim}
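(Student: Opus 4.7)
The plan is to rewrite $\sum_{w \in V(H)} \rho_w$ as a sum over edges contained in $L$, and then combine the cleanness hypothesis on $H$ with \Clm{nbd-wgt}.

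First I would swap the order of summation implicit in the definition of $\rho_w$. By definition, $\sum_{w \in V(H)} \rho_w$ counts, over all triples $(w,u,u')$ such that $(w,u,u') \in T_H$ and $u,u' \in L$, the weight $\wgt((w,u,u'))$. Equivalently, each triangle $t \in T_H$ contributes $\wgt(t)$ times the number of \emph{ordered} pairs from $L$ that it contains, which equals the number of its edges lying in $L$. Concretely: a triangle with all three vertices in $L$ has three edges in $L$ and contributes $3\wgt(t)$ (from three choices of which vertex plays the role of $w$); a triangle with exactly two vertices in $L$ has one edge in $L$ and contributes $\wgt(t)$ (the unique vertex outside $L$ must play the role of $w$); a triangle with at most one vertex in $L$ contributes nothing. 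The same multiplicities arise when we sum $\wgt(T_H(e))$ over edges $e \subseteq L$, since each such triangle is counted exactly once per edge it shares with $L$. Hence
\[
\sum_{w \in V(H)} \rho_w \;=\; \sum_{e \in E(H),\, e \subseteq L} \wgt(T_H(e)).
\]

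Next I would invoke cleanness: by \Def{clean}, for every $e \in E(H)$ we have $\wgt(T_H(e)) \geq \stch \,\wgt(e)$. Applying this edgewise to the sum above and then using \Clm{nbd-wgt} yields
\[
\sum_{w \in V(H)} \rho_w \;\geq\; \stch \sum_{e \in E(H),\, e \subseteq L} \wgt(e) \;\geq\; \stch \cdot \frac{\stch^2}{8} \;=\; \frac{\stch^3}{8},
\]
which is the claimed bound.

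There is no real obstacle here: the argument is purely a combination of a counting identity with two previously established estimates. The only point requiring care is bookkeeping in the counting identity, namely verifying that triangles with two versus three vertices in $L$ are accounted for with matching multiplicities on both sides; once that is verified, the proof is a one-line chain of inequalities.
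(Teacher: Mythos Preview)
Your proof is correct and follows the same approach as the paper: rewrite $\sum_{w} \rho_w$ as $\sum_{e \in E(H),\, e \subseteq L} \wgt(T_H(e))$, then apply cleanness and \Clm{nbd-wgt}. You are in fact more careful than the paper about the multiplicity bookkeeping for triangles with two versus three vertices in $L$; the paper simply asserts the identity without spelling this out.
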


\begin{proof} Note that $\sum_{w \in V_H} \rho_w$ is equal to $\sum_{e \in E_H, e \subset L} \wgt(T_H(e))$.
	Both these expressions give the total weight of all triangles in $H$ that involve
	two vertices in $L$. Since $H$ is clean, for all edges $e \in E_H$, $\wgt(T_H(e)) \geq \stch \wgt(e)$.
	Hence, $\sum_{e \in E_H, e \subset L} \wgt(T_H(e)) \geq \stch \sum_{e \in E_H, e \subset L} \wgt(e)$.
    Applying \Lem{nbd-wgt}, we can lower bound the latter by $\stch^3/8$.
\end{proof}

At this stage, we have lower bounds on the edge weight contained in $L$ and the triangle
weight incident to (edges in) $L$. We come to a central calculation; upper bounding the size
of $C$. The sweep cut step in \extract{} removes half of the triangle weight incident to $L$.
We actually show that the cut size is quite small and comparable to $L$ (or the degree of 
the starting vertex $v$). This is quite surprising, since the triangle weight 
incident to $L$ can be ``spread" over a large set of vertices in the graph. Nonetheless,
we show that at least half of this weight is concentrated on a small set.
(We do not try to optimize the constant factors.)

\begin{lemma} \label{lem:C} $|C| \leq 144\stch^{-5} d_v$.
\end{lemma}

\begin{proof} We first show that a few $\rho_w$ values dominate the sum, using a somewhat roundabout argument.
This bound will help us show that $C$ is small.

We will prove the following bound:

\begin{equation} \label{eq:rho-sqrt}
    \sum_{w \in V_H} \sqrt{\rho_w} \leq 2 \stch^{-1} \sqrt{d_v}.
\end{equation}

Let $c_w$ be the number of vertices in $L$ that are neighbors (in $H$) of $w$. Note that
for any triangle $(u,u',w)$ where $u,u' \in L$, both $u$ and $u'$ are common neighbors of $w$ and $v$. 
The number of triangles $(u,u',w)$ where $u,u' \in L$ is at most $c^2_w$.
The weight of any triangle in $H$ is at most $d^{-3}_v$, since $d_v$ is the lowest degree (in $G$)
of all vertices in $H$. As a result, we can upper bound $\rho_w \leq d^{-3}_v c^2_w$.

Taking square roots and summing over all vertices,
\begin{equation} \label{eq:sqrt}
	\sum_{w \in V_H} \sqrt{\rho_w} \leq d^{-3/2}_v \sum_{w \in V_H} c_w .
\end{equation}
Note that $\sum_{w \in V_H} c_w$ is exactly the sum over $u \in L$ of the degrees
of $u$ in the subgraph $H$. (Every edge incident to $u \in L$ gives a unit contribution
to the sum $\sum_{w \in V_H} c_w$.) By definition, every vertex in $L$ has
degree in $H$ at most $2\stch^{-1}d_v$. The size of $L$ is at most $d_v$.
Hence, $\sum_{w \in V_H} c_w \leq 2\stch^{-1} d^2_v$. Plugging into \Eqn{sqrt},
we deduce that $\sum_{w \in V_H} \sqrt{\rho_w} \leq 2\stch^{-1} \sqrt{d_v}$.

\medskip

Now that \Eqn{rho-sqrt} is proven, we work on bounding $|C|$.
For convenience, let us reindex vertices so that $\rho_1 \geq \rho_2 \geq \rho_3 \ldots$.
Let $r \leq n$ be an arbitrary index. Because we index in non-increasing order, note that $\sum_{j \leq n} \rho_j \geq r \rho_r$.
Furthermore, $\forall j > r$, $\rho_j \leq \rho_r$.
\begin{eqnarray} \label{eq:rho-sum}
	\sum_{j > r} \rho_j \leq \sqrt{\rho_r} \sum_{j > r} \sqrt{\rho_j} \leq \sqrt{\frac{\sum_{j \leq n} \rho_j}{r}} \sum_{j \leq n} \sqrt{\rho_j}
	= \Big[\frac{\sum_{j \leq n} \sqrt{\rho_j}}{\sqrt{r} \cdot \sqrt{\sum_{j \leq n} \rho_j}}\Big] \sum_{j \leq n} \rho_j
\end{eqnarray}
Observe that \Eqn{rho-sqrt} gives an upper bound on the numerator, while \Clm{rho-bound} gives a lower bound
on (a term in) the denominator. Plugging those bounds in \Eqn{rho-sum},
\begin{equation}
	\sum_{j > r} \rho_j \leq \frac{2\stch^{-1}\sqrt{d_v}}{\sqrt{r} \cdot \stch^{3/2}/\sqrt{8}} \sum_{j \leq n} \rho_j \leq \frac{1}{\sqrt{r}} \cdot \frac{6\sqrt{d_v}}{\stch^{5/2}} \cdot \sum_{j \leq n} \rho_j.
\end{equation}
Suppose $r > 144 \stch^{-5} d_v$. Then $\sum_{j > r} \rho_j < (1/2) \sum_{j \leq n} \rho_j$. The sweep cut $C$ is 
constructed with the smallest value of $r$ such that $\sum_{j > r} \rho_j < (1/2) \sum_{j \leq n} \rho_j$. Hence,
$|C| \leq 144 \stch^{-5} d_v$.
\end{proof}

We state an additional technical claim that bounds the triangle weight incident to a single vertex.

\begin{claim} \label{clm:tri-wgt-vertex} For all vertices $u \in V_H$, $\wgt(T_H(u)) \leq (2d_v)^{-1}$.
\end{claim}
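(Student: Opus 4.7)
The plan is to bound the triangle weight incident to $u$ by reducing it to a sum over edges incident to $u$, and then on each such edge apply the minimum-degree bound offered by the vertex $v$. Concretely, every triangle in $T_H(u)$ has the form $(u,w,w')$ with $w,w' \in N_H(u)$ and $(w,w') \in E(H)$, and it contains exactly two edges incident to $u$, namely $(u,w)$ and $(u,w')$. So I would first write
\[
\wgt(T_H(u)) \;=\; \tfrac{1}{2} \sum_{w \in N_H(u)} \wgt(T_H((u,w))),
\]
where the factor $1/2$ accounts for the double counting of each triangle by its two edges at $u$.

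Next I would bound $\wgt(T_H((u,w)))$ for a fixed neighbor $w$ of $u$ in $H$. Expanding the definition,
\[
\wgt(T_H((u,w))) \;=\; \sum_{w' \in N_H(u)\cap N_H(w)} \frac{1}{d_u d_w d_{w'}} \;\leq\; \sum_{w' \in N_H(w)} \frac{1}{d_u d_w d_{w'}},
\]
where the inequality simply drops the constraint that $w'$ lies in $N_H(u)$. Since $v$ minimizes $d_v$ among vertices of $V(H)$, we have $d_{w'} \geq d_v$ for every $w' \in V(H)$, so the sum is bounded by $\frac{|N_H(w)|}{d_u d_w d_v} \leq \frac{d_w}{d_u d_w d_v} = \frac{1}{d_u d_v}$. (This is exactly the single-edge bound sketched in the proof outline just above the claim, cf.~\Eqn{edge-wt}.)

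Combining the two steps,
\[
\wgt(T_H(u)) \;\leq\; \tfrac{1}{2} \sum_{w \in N_H(u)} \frac{1}{d_u d_v} \;=\; \frac{|N_H(u)|}{2 d_u d_v} \;\leq\; \frac{d_u}{2 d_u d_v} \;=\; \frac{1}{2 d_v},
\]
using $|N_H(u)| \leq d_u$. There is no serious obstacle: the only thing to get right is the bookkeeping, i.e.\ the factor of $2$ arising from double-counting each triangle via its two edges at $u$, and the repeated use that every vertex of $V(H)$ has degree in $G$ at least $d_v$ (so that both the dropped-restriction step and the final estimate $|N_H(u)| \leq d_u$ apply safely).
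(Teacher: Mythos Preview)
Your proof is correct and follows essentially the same approach as the paper: you decompose $\wgt(T_H(u))$ as $\tfrac{1}{2}\sum_{w \in N_H(u)} \wgt(T_H((u,w)))$, bound each edge term by $\frac{1}{d_u d_v}$ using the minimum-degree property of $v$ and $|N_H(w)| \leq d_w$, and then sum using $|N_H(u)| \leq d_u$. The paper's proof is identical in structure, differing only in cosmetic ordering of the inequalities within the edge bound.
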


\begin{proof} Consider edge $(u,w) \in E_H$.
	We will prove that $\wgt(T_H((u,w))) \leq d^{-1}_u d^{-1}_v$. Recall that $d_v$ is the smallest degree
	among vertices in $H$. Furthermore, $|T_H((u,w))| \leq d_w$, since the third vertex in a triangle containing $(u,w)$
	is a neighbor of $w$.
	$$ \wgt(T_H((u,v))) = \sum_{z: (z,u,w) \in T_H} \frac{1}{d_u d_w d_z} \leq \frac{1}{d_u d_v} \sum_{z: (z,u,w) \in T_H} \frac{1}{d_w}
	\leq \frac{1}{d_u d_v} \times \frac{d_w}{d_w} = \frac{1}{d_u d_v}$$
	We now bound $\wgt(T_H(u))$ by summing over all neighbors of $u$ in $H$.
	\begin{align*}
		\wgt(T_H(u)) = (1/2) \sum_{w: (u,w) \in E_H} \wgt(T_H((u,w))) \leq (1/2) \sum_{w: (u,w) \in E_H} \frac{1}{d_u d_v}
		= \frac{1}{2d_v} \sum_{w: (u,w) \in E_H} \frac{1}{d_u} \leq \frac{1}{2d_v} \times \frac{d_u}{d_u} = \frac{1}{2d_v}. 
	\end{align*}
\end{proof}

\subsection{The proof of \Thm{extract}} \label{sec:proof-extract}
	
Recall that $X$ is $\{v\} \cup L \cup C$.
By construction, the total weight of triangles inside $X$ is at least $\sum_{v \leq n} \rho_v/2$.
By \Clm{rho-bound}, $\sum_{v \leq n} \rho_v/2 \geq \stch^3/16$. 

Let us now bound that total triangle weight incident to $X$ in $H$. Observe that $|X| = 1 + |L| + |C|$
which is at most $1 + d_v + \stch^{-5}144 d_v$, by \Lem{C}. We can further bound $|X| \leq \stch^{-5}146 d_v$.
By \Clm{tri-wgt-vertex}, the total triangle weight incident to a vertex is at most $(2d_v)^{-1}$.
Multiplying, the total triangle weight incident to all of $X$ is at most $73 \stch^{-5}$.

Thus, the triangle weight contained in $X$ is at least $\frac{\stch^3/16}{73 \stch^{-5}} = \Omega(\stch^8)$ times
the triangle weight incident to $X$. This completes the proof of the first statement of \Thm{extract}.

{\bf Proof of uniformity of $\cA|_X$:} 
For convenience, let $B$ denote the set $\{e | e \subseteq L, \wgt(e) \geq \stch^2d^{-2}_v/16\}$.
By \Lem{nbd-wgt}, $\sum_{e \subseteq L} \wgt(e) \geq \stch^2/8$.
There are at most ${d_v \choose 2} \leq d^2_v/2$ edges in $B$. For every edge $e$, $\wgt(e) \leq 1/d^2_v$. 
\begin{align*}
	\frac{\stch^2}{8} \leq \sum_{e \notin B} \wgt(e) + \sum_{e \in B} \wgt(e) 
	\leq d^2_v \times \stch^2d^{-2}_v/16 + |B|d^{-2}_v = \stch^2/16 + |B|d^{-2}_v.
\end{align*}
Rearranging, $|B| \geq \stch^2d^2_v/16$. 

Hence, there are at least $\stch^2d^2_v/16$ edges contained in $X$
with weight at least $\stch^2d^{-2}_v/16$. Consider the random variable $Z$ that
is the weight of a uniform random edge contained in $X$. Since $|X| \leq \stch^{-5} 144 d_v$,
the number of edges in $X$ is at most $\stch^{-10} (144)^2 d^2_v$. So, 
\begin{equation}
    \EX[Z] \geq \frac{\stch^2 d^2_v/16}{\stch^{-10}(144)^2 d^2_v} \times \stch^2d^{-2}_v/16 = \Omega(\stch^{14} d^{-2}_v).
\end{equation}
The maximum value of $Z$ is the largest possible weight of an edge in $E_H$,
which is at most $d^{-2}_v$. Applying the reverse Markov bound of \Lem{rev-markov},
with probability $\Omega(\stch^{14})$, a uniform random edge in $X$
has weight $\Omega(\stch^{14} d^{-2}_v)$.

By \Lem{nbd-wgt} $\sum_{e \in E_H, e \subseteq L} \wgt(e) \geq \stch^2/8$.
The weight of any edge is at most $d^{-2}_v$, so there must be at least $\stch^2 d^2_v/8$ edges
in $L$. This implies that $|L| = \Omega(\stch d_v)$. Thus, $|X| = \Omega(\stch d_v)$
and $d^{-2}_v = \Omega(\stch^2/|X|^2)$. 

With the bounds from the previous two paragraphs, there is an $\Omega(\stch^{14})$
fraction of edges in $X$ whose weight is $\Omega(\poly(\stch)/|X|^2)$. 
So we prove the uniformity of $\cA|_X$. 

{\bf Proof of strong uniformity:} For strong uniformity, we need to repeat the above argument
within neighborhoods in $X$. We prove in the beginning of this proof that the total triangle weight inside $X$ is at least $\stch^3/16$. 
We also proved that $|X| \leq 146\stch^{-5}d_v$. Consider the random variable $Z$ that is the triangle
weight contained in $X$ incident to a uniform random vertex in $X$. Note that $\EX[Z] \geq (\stch^3/16) / (146\stch^{-5} d_v) = \Omega(\stch^8 d^{-1}_v)$.
By \Clm{tri-wgt-vertex}, $Z$ is at most $(2d_v)^{-1}$. Applying \Lem{rev-markov},
at least $\Omega(\stch^8 |X|)$ vertices in $X$ are incident to at least $\Omega(\stch^8 d^{-1}_v)$ triangle weight
inside $X$.

Consider any such vertex $u$. Let $N(u)$ be the neighborhood of $u$ in $X$. 
Each triangle in $X$ has weight at most $d^{-3}_v$. The triangle weight
incident to $u$ is at most $d^{-3}_v|N(u)|^2$. By the choice of $u$, this triangle
weight is at least $\Omega(\stch^8 d^{-1}_v)$. Hence, $|N(u)| = \Omega(\stch^4 d_v)$.

Every edge $e$
in $N(u)$ forms a triangle with $u$ with weight $\wgt(e)/d_u$. Hence, noting that $d_u \geq d_v$,
\begin{equation}
    \sum_{e \subseteq N(u)} \wgt(e) d^{-1}_u = \Omega(\stch^8 d^{-1}_v) \ \ \ \Longrightarrow \ \ \ \sum_{e \subseteq N(u)} \wgt(e) = \Omega(\stch^8).
\end{equation}
There are at most $|X|^2 \leq \stch^{-10} (146)^2 d^2_v$ edges in $N(u)$. Let $Y$ denote the weight
of a uniform random edge in $N(u)$. Note that $\EX[Y] = \Omega(\stch^8/(\stch^{-10} (146)^2 d^2_v)) = \Omega(\stch^{18} d^{-2}_v)$.
The maximum weight of an edge is at most $d^{-2}_v$. By \Lem{rev-markov}, 
at least an $\Omega(\stch^{18})$ fraction of edges in $N(u)$ have a weight of at least $\Omega(\stch^{18} d^{-2}_v)$.
Since $|N(u)| = \Omega(\stch^4 d_v)$, $N(u)$ is also $\poly(\stch)$-uniform.

\section{Wrapping up} \label{sec:wrapup}

We complete the proof of our main result, \Thm{main-decomp}. This proof mainly involves
putting together the various claims and lemmas from the previous section.

\begin{proof} (of \Thm{main-decomp}) 
We partition all the triangles of $G$ into three sets depending on how they
are affected by \decompose$(G)$.
(i) The set of triangles removed by the cleaning
step of \Step{clean}, (ii) the set of triangles contained in some $X_i \in \bX$, or (iii) 
the remaining triangles. Abusing notation, we refer to these sets as $T_C$, $T_X$, and $T_R$ respectively. For each such $T_i$, we all call $\wt(T_i)$ the sum of the weights of all triangles in the set.
Note that the triangles of $T_R$ are the triangles ``cut" when $X_i$ is removed.

Let us denote by $H_1, H_2, \ldots, H_k$ the subgraphs of which \extract{} is called.
Let the output of \extract$(H_i)$ be denoted $X_i$.
By the uniformity guarantee of \Thm{extract}, each $\cA|_{X_i}$ is $\delta \stc^c$-uniform for appropriate $\delta$ and a constant $c$.

It remains to prove the coverage guarantee. We now sum the bound of \Thm{extract}
over all $X_i$. (For convenience, we expand out $\stch$ as $\stc/6$ and let $\delta'$ denote
a sufficiently small constant.)
\begin{equation}
	\sum_{i \leq k} \sum_{t \in T(H_i), t \subseteq X_i} \wgt(t) \geq (\delta' \stc^8) \sum_{i \leq k} \sum_{t \in T(H_i), t \cap X_i \neq \emptyset} \wgt(t).
\end{equation}
The LHS is precisely $\wgt(T_X)$. Note that a triangle appears at most once in the double summation 
in the RHS. That is because if $t \cap X_i \neq \emptyset$, then $t$ is removed
when $X_i$ is removed. Since $H_i$ is always clean, the triangles of $T_C$ cannot participate
in this double summation. Hence, the RHS summation is $\wgt(T_X) + \wgt(T_R)$
and we deduce that 
\begin{equation} \label{eq:wgt-bound}
	\wgt(T_X) \geq \delta' \stc^8 (\wgt(T_X) + \wgt(T_R))
\end{equation}

Note that $\wgt(T_c) + \wgt(T_x) + \wgt(T_r) = \sum_{t \in T} \wgt(t)$. There is where
the definition of $\stc$ makes its appearance. By \Lem{stc}, we can write the above equality
as $\wgt(T_C) + \wgt(T_X) + \wgt(T_R) = (\stc/3) \sum_{e \in E} \wgt(e)$. 

We now prove the bound $\wgt(T_C) \leq (\stc/6) \sum_{e \in E} \wgt(e)$.
Consider an edge $e$ removed at \Step{clean} of \decompose. Recall that $\stch$ is set to $\stc/6$. At that removal,
the total weight of triangles removed (cleaned) is at most $(\stc/6) \wgt(e)$. An edge
can be removed at most once, so the total weight of triangles removed by cleaning
is at most $(\stc/6) \sum_{e \in E} \wgt(e)$.

Hence, we get the inequality $\wgt(T_X) + \wgt(T_R) \geq (\stc/6) \sum_{e \in E} \wgt(e)$.
By \Clm{frob}, $\sum_{e \in E} \wgt(e) = \|\cA\|^2_F/2$. Applying \Eqn{wgt-bound},
$\wgt(T_X) = \poly(\stc)\|\cA\|^2_F$. By \Clm{tri-frob},
$\sum_{i \leq k} \|\cA|_{X_i}\|^2_F = \Omega(\wgt(T_X))$, which is at least $\poly(\stc) \|\cA\|^2_F$.
That completes the proof of the coverage bound.

\end{proof}

\subsection{Algorithmics} \label{sec:impl}

We discuss implementations of the procedures computing the decomposition
of \Thm{main-decomp}. The main operation required is a triangle enumeration of $G$; there
is a rich history of algorithms for this problem. The best known bound for sparse graph
is the classic algorithm of Chiba-Nishizeki that enumerates all triangles in $O(m\alpha)$ time,
where $\alpha$ is the graph degeneracy.

We provide a formal theorem providing a running time bound. We do not explicitly
describe the implementation through pseudocode, and instead explain the main details
in the proof. In practice, we implement the algorithm described in the proof, and 
observe it to have good empirical performance.

\begin{theorem} \label{thm:runtime} There is an implementation of \decompose$(G)$ whose 
running time is $O(R + (m + n + T)\log n)$, where $R$ is the running time of listing
all triangles. The space required is $O(T)$ (where $T$ is the triangle count).
\end{theorem}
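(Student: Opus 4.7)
The plan is to maintain triangle--edge incidence lists together with two priority queues---one over current vertices by original degree, one over current edges by their ``uncleanliness'' slack---so that every triangle is touched $O(1)$ times amortized across the whole algorithm, modulo a $\log n$ factor for heap updates. First I would enumerate all triangles of $G$ (cost $R$, space $O(T)$) and, for each edge $e$, build a doubly-linked list of its containing triangles together with a running cache of $\wgt(T_H(e))$. I would also build a min-heap $\mathcal{Q}_V$ on $V(H)$ keyed by the original degree $d_v$ (these keys never change, since $d_v$ is the degree in $G$, not in $H$), and a heap $\mathcal{Q}_E$ on $E(H)$ keyed by the slack $\wgt(T_H(e)) - \stch \cdot \wgt(e)$ so that any unclean edge can be popped in $O(\log n)$ time.

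The inner loops of \decompose would then share a single routine for deleting an edge $e$ from $H$: walk the triangle list of $e$ and, for each triangle $t = \{e, e', e''\}$, splice $t$ out of the lists of $e'$ and $e''$ in $O(1)$ time via back-pointers, subtract $\wgt(t)$ from the cached weights of $e'$ and $e''$, and update their keys in $\mathcal{Q}_E$. Cleaning repeatedly pops the minimum-key edge of $\mathcal{Q}_E$ and calls this routine as long as the popped edge is unclean. For $\extract(H)$, I would pop the smallest-$d_v$ vertex $v$ from $\mathcal{Q}_V$, scan its (at most $d_v$) neighbors in $H$ to assemble $L$, compute all $\rho_w$ by iterating over the triangle lists of the edges with both endpoints in $L$, sort the supporting vertices by $\rho_w$ to form the sweep cut $C$, and finally delete all edges incident to $\{v\}\cup L\cup C$ via the shared edge-deletion routine (which in turn triggers any further cleaning).

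The amortization argument is the heart of the proof. Every triangle is created once during the initial listing and destroyed at most once afterward (when any of its three edges is deleted, whether by cleaning or by vertex extraction), and each destruction triggers $O(1)$ pointer work plus $O(\log n)$ heap updates, contributing $O(T\log n)$ in total. Each vertex is popped from $\mathcal{Q}_V$ at most once, and each edge enters $\mathcal{Q}_E$ once at initialization and is only key-updated through triangle destructions, giving $O((m+n)\log n)$ of queue overhead. The per-extraction sorts telescope to $\sum_i |X_i|\log |X_i| = O(n\log n)$ because the $X_i$ are disjoint. Combined with the $R$ cost of the initial triangle enumeration (which already absorbs the $O(T)$ term), the total is $O((R+m+n)\log n)$, and the space is $O(T)$ since the dominant storage is the triangle--edge incidence lists. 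The main obstacle, and the step that needs careful bookkeeping, is the cascade of cleanings: a single edge deletion can drive many further edges unclean, and one must check that every re-enqueueing of an edge into $\mathcal{Q}_E$ is chargeable either to its initial insertion or to a distinct triangle-destruction event at one of its incident triangles, so that the amortized heap work per destroyed triangle is $O(\log n)$ rather than growing with cascade depth.
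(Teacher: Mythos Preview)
Your proposal is correct and mirrors the paper's proof: enumerate triangles once, index them by edge, keep a min-heap of vertices by original degree, and amortize all subsequent work against the one-time destruction of each triangle or edge. Two small remarks. First, the paper replaces your slack-heap $\mathcal{Q}_E$ by a plain unordered list $U$ of currently-unclean edges: whenever a triangle destruction drops some $\wgt(T_H(e'))$ below $\stch\,\wgt(e')$, the edge $e'$ is appended to $U$ in $O(1)$ time, and cleaning simply pops any element of $U$. This dissolves the cascade worry you flag at the end---there is no re-keying at all, only $O(1)$ list work per destroyed triangle. Second, your telescoping bound $\sum_i |X_i|\log|X_i|$ for the $\rho_w$-sort is not quite right, since the vertices with $\rho_w>0$ need not lie in $X_i$ and a single vertex can reappear as a supporting vertex in several extractions. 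The repair uses the charging you already have: each supporting vertex $w$ in round $i$ witnesses at least one triangle with an edge inside $L_i\subseteq X_i$, and every such triangle is destroyed when $X_i$ is removed, so the total number of supporting vertices summed over all rounds is $O(T)$ and the sort cost is $O(T\log n)\subseteq O(R\log n)$.
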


\begin{proof} We assume an adjacency list representation where each list is stored
in a dictionary data structure with logarithmic time operations (like a self-balancing 
binary tree).

We prepare the following data structure that maintains
information about the current subgraph $H$. We initially set $H = G$.
We will maintain all lists as hash tables so that elementary operations
on them (insert, delete, find) can be done in $O(1)$ time.

\begin{compactenum}
	\item A list of all triangles in $T_H$ indexed by edges. Given an edge $e$,
	we can access a list of triangles in $T_H$ containing $e$.
	\item A list of $\wgt(T_H(e))$ values for all edges $e \in E_H$.
	\item A list $U$ of all (unclean) edges such that $\wgt(T_H(e)) < \stch \wgt(e)$.
	\item A min priority queue $Q$ storing all vertices in $V_H$ keyed by degree $d_v$. We will
	assume pointers from $v$ to the corresponding node in $Q$.
\end{compactenum}

\medskip

These data structures can be initialized by enumerating all triangles, indexing them, and preparing all the lists.
This can be done in $O(R)$ time. 

We describe the process to remove an edge from $H$. When edge $e$ is removed,
we go over all the triangles in $T_H$ containing $e$. For each such triangle $t$
and edge $e' \in t$, we remove $t$ from the triangle list of $e'$. We then update
$\wgt(T_H(e'))$ by reducing it by $\wgt(t)$. If $\wgt(T_H(e'))$ is less than $\wgt(e)$,
we add it to $U$. Finally, if the removal of $e$ removes a vertex $v$ from $V_H$ (i.e. $v$ has no edges incident to it remaining in $H$),
we remove $v$ from the priority queue $Q$. Thus, we can maintain the data structures.
The running time is $O(|T_H(e)|)$ plus an additional $\log n$ for potentially updating $Q$.
The total running time for all edge deletes is $O(T + n\log n)$.

With this setup in place, we discuss how to implement \decompose. The cleaning operation in \decompose{}
can be implemented by repeatedly deleting edges from the list $U$, until it is empty.

We now discuss how to implement \extract. We will maintain a max priority queue $R$ maintaining
the values $\{\rho_w\}$. Using $Q$ as defined earlier, we can find the vertex $v$ of minimum degree.
By traversing its adjacency list in $H$, we can find the set $L$. We determine all edges in $L$
by traversing the adjacency lists of all vertices in $L$. For each such edge $e$, we enumerate
all triangles in $H$ containing $e$. For each such triangle $t$ and $w \in t$, we will update 
the value of $\rho_w$ in $R$. 

We now have the total $\sum_w \rho_w$ as well. We find the sweep cut by repeatedly deleting from the
max priority queue $R$, until the sum of $\rho_w$ values is at least half the total. Thus, we can compute 
the set $X$ to be extracted. The running time is $O((|X| + |E(X)| + |T(X)|)\log n)$, where $E(X), T(X)$
are the set of edges and triangles incident to $X$.

Overall, the total time for all the extractions and resulting edge removals
is $O((n+m+T)\log n)$. The initial triangle enumeration takes $R$ time. We add to complete the proof.
\end{proof}
\section{Empirical Validation} \label{sec:emp}
We present an empirical validation of \Thm{main-decomp} and the procedure \decompose.
We show that spectral triadic decompositions exist in real-world networks and are semantically meaningful.
We perform experiments on a number of real-world networks, listed in \Tab{summary}.

\paragraph{Datasets:} The datasets have been taken from SNAP~\cite{snapnets}, the Network Repository~\cite{nr}, Arnetminer~\cite{aminer}, and Newman’s graph collection~\cite{NewmanCondMat99}. 
The two coauthorship graphs come with metadata. The ca-DBLP graph has manually curated ground truth clusters based on discipline of authors, and the ca-cond-matL graph is labeled by the name of 
the author/scientist. 
We use these datasets to determine the semantic meaning of the extracted clusters.
We make the graphs undirected. The network names indicate the raw data source: `ca' refers to coauthorship networks (ca-CondMat is for researchers who work in condensed matter, ca-DBLP does the same for researchers whose work is on DBLP, a computer science bibliography website), ones beginning with `soc' are social networks (socfb-Rice31 is a Facebook network, soc-hamsterster is from Hamsterster, a pet social network), and `cit' refers to citation networks.

\begin{table}
	\begin{center}
		\begin{tabular}{||c|| c |c| c|c||} 
			\hline
			Dataset & \#Vertices & \#Edges & \#Triangles & $\tau$  \\ [0.5ex] 
			\hline\hline
			soc-hamsterster & 2,427 & 16,630 & 53,251 & 0.215 \\ 
			\hline
			socfb-Rice31 & 4,088 & 184,828 & 1,904,637 & 0.122\\ 
			\hline
			caHepTh & 9,877 & 24,827 & 28,339 &0.084\\
			\hline
			ca-cond-matL & 16,264 & 47,594 & 68,040 & 0.255 \\
			\hline
			ca-CondMat & 23,133  & 93,497 & 176,063 &0.125 \\
			\hline
		    cit-DBLP & 217,312 & 632,542 & 248,004 & 0.087\\
			\hline
			ca-DBLP & 317,080 & 1,049,866 & 2,224,385  & 0.248\\ [1ex]

			\hline
		\end{tabular}\caption{Summary of datasets. } \label{tab:summary}
	\end{center}
\end{table}

\paragraph{Implementation details:} The code is written in Python, and we run it on Jupyter using Python 3.7.6 on a Dell notebook with an Intel i7-10750H processor and 32 GB of ram. The code requires enough storage to store all lists of triangles, edges and vertices, and may be found on github at \url{https://bitbucket.org/Boshu1729/triadic/src/master/}. We set the parameter $\eps$ to $0.1$ for all the experiments, unless stated otherwise. 
In general, we observe that the results are stable with respect to this parameter, and it is a convenient
choice for all datasets.

\paragraph{The $\tau(G)$ values of real data:} In \Tab{summary}, we list the spectral triadic content,
$\tau(G)$, of the real-world networks. Observe that they are quite large. They are the highest in social networks, consistently ranging in values greater than $0.1$. 
This shows the empirical significance of $\tau(G)$ in real-world networks, which is consistent with large
clustering coefficients. 

\subsection{Cluster details} \label{sec:details-exp}

A spectral triadic decomposition produces a large number of approximately
uniform dense clusters, starting from only the promise of a large $\stc(G)$ value. We compute
these decompositions for all our datasets, using the algorithm in the proof of \Thm{runtime}. For comparison, we also compute
the clustering output of the classic Louvain algorithm~\cite{Louvain}.
In \Fig{density},
we give a scatter point of clusters with axes of cluster size versus edge density. (Edge density is
the number of edges divided by the number of vertices choose 2.) Each cluster is a point in the plot.
In \Tab{datasummary}, we give a summary of cluster properties.

\begin{table}[ht]
\begin{center}

	\begin{tabular}{||c||c|c|c|c|c|c|}
		\hline 
		Dataset & \# Clusters & Frac. of$\|M\|_F$ \% & Largest Cl. & \% $V$ in Cl. & Mean ED & 10$^{th}$ \% ED\\ \hline
		soc-hamsterster & 208 & 85.34 & 81 & 76.1\% & 0.79 & 0.33 \\ \hline
		socfb-Rice31 & 86 & 36.76 & 230 & 86.8\% & 0.46 & 0.19 \\ \hline
		ca-HepTh & 849 &  73.79 & 47 & 58.1\% & 0.72 & 0.33 \\ \hline
		ca-CondMat & 2049 & 58.84 & 68 & 75.6\% & 0.70 & 0.30\\ \hline
		ca-condmatL & 1566 & 78.64 &47 & 71.1\% & 0.71 & 0.32\\ \hline
		cit-DBLP & 7265 & 77.15 & 111 & 27.6\% & 0.49 & 0.19\\ \hline
	\end{tabular}
	
	\caption{Summary of spectral triadic clusters across datasets: number of clusters, fraction of Frobenius norm, the size of the largest cluster, total number of vertices in clusters, mean edge density, and 10$^{th}$ percentile of density.}\label{tab:datasummary}
\end{center}
\vspace{-20pt}
\end{table}

We observe that in all cases, the decompositions create a large number of dense clusters. These densities are typically more than $0.2$, which is quite high.
For four of the six datasets we studied, the clusters preserved over 70\% of the vertices. We also show that fraction of total Frobenius norm
contained in these clusters, and observe that it is typically more than 50\%. The clusters are never too large, which is understandable given the 
large edge density.

For comparison, the scatter plots in \Fig{density} also give the clusters of the classic Louvain decomposition~\cite{Louvain}. While Louvain gives
numerous clusters, in almost all cases, it gives extremely large clusters that are of low density (towards the bottom right
of the plots). For cit-DBLP, the Louvain output is extremely sparse, in comparison to the spectral triadic clusters.
In general, the spectral triadic clusters are above in the plots, meaning that they
are denser than the clusters produced by the Louvain algorithm. 
\begin{wrapfigure}{r}{0.35\linewidth}
	\centering
	\includegraphics[width = 0.8\linewidth]{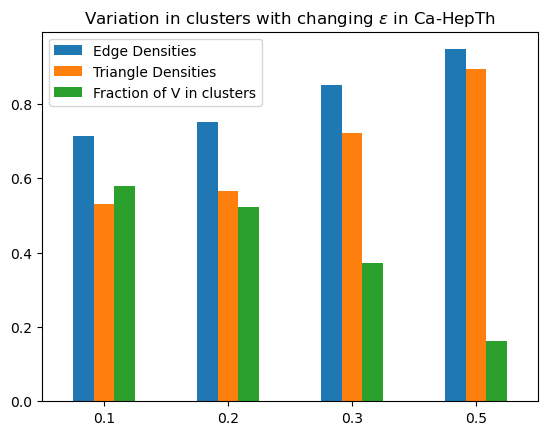}
	\caption{Mean edge density, triangle density, and fraction of vertices preserved in clusters extracted from caHepTh for $\eps=0.1, 0.2, 0.3, 0.5$. Edge and triangle density rise with increasing $\eps$, but fewer vertices are preserved.}\label{fig:eps}
\end{wrapfigure}
For a deeper understanding, we show a scatter plot for uniformity values in \Fig{unif}. (For ease of reading,
we only provide plots on three datasets, but our results are consistent across all of them.) The uniformity
of a cluster is the largest possible $\alpha$ value according to \Def{uniform}. Recall that large uniformity
means that cluster is more ``clique-like", and involves vertices whose degree is comparable to cluster size.
We see that the spectral triadic clusters typically have high uniformity, in constrast to Louvain clusters.
So while Louvain clusters may be somewhat dense (according to \Fig{density}), these clusters involve high degree vertices (so the cluster has low uniformity). Thus, these clusters are less 
community-like than spectral triadic clusters. We will see more evidence of this in our semantic experiments.

\begin{figure*}[h!]
	\centering
	\begin{subfigure}[soc-hamsterster]
		{\includegraphics[width=0.3\textwidth]{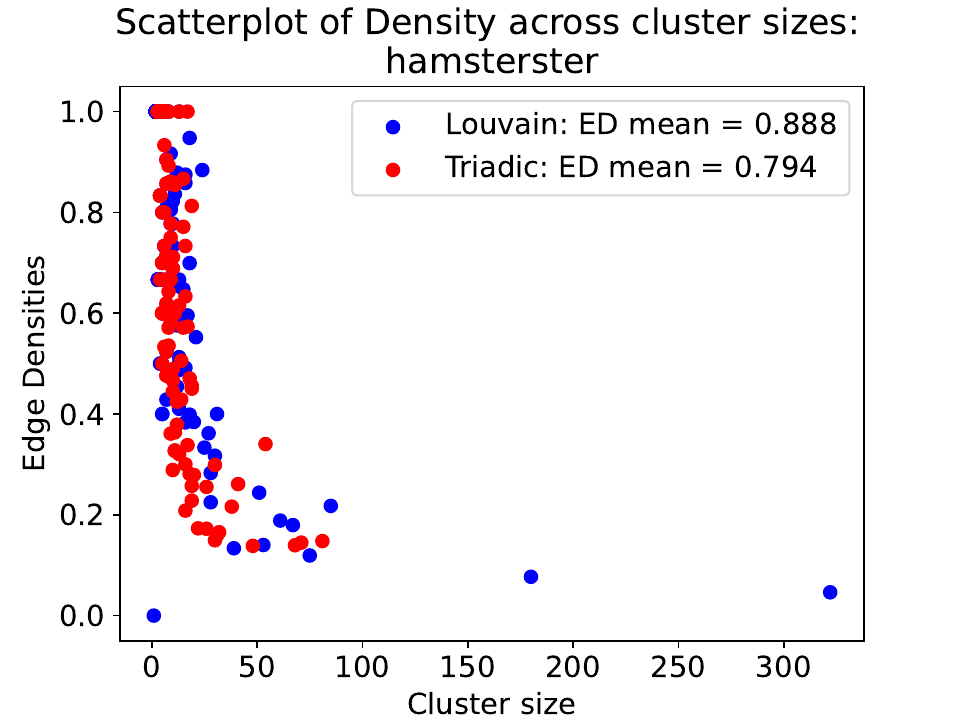}}
	\end{subfigure}	
	\begin{subfigure}[socfb-Rice31]
		{\includegraphics[width=0.3\textwidth]{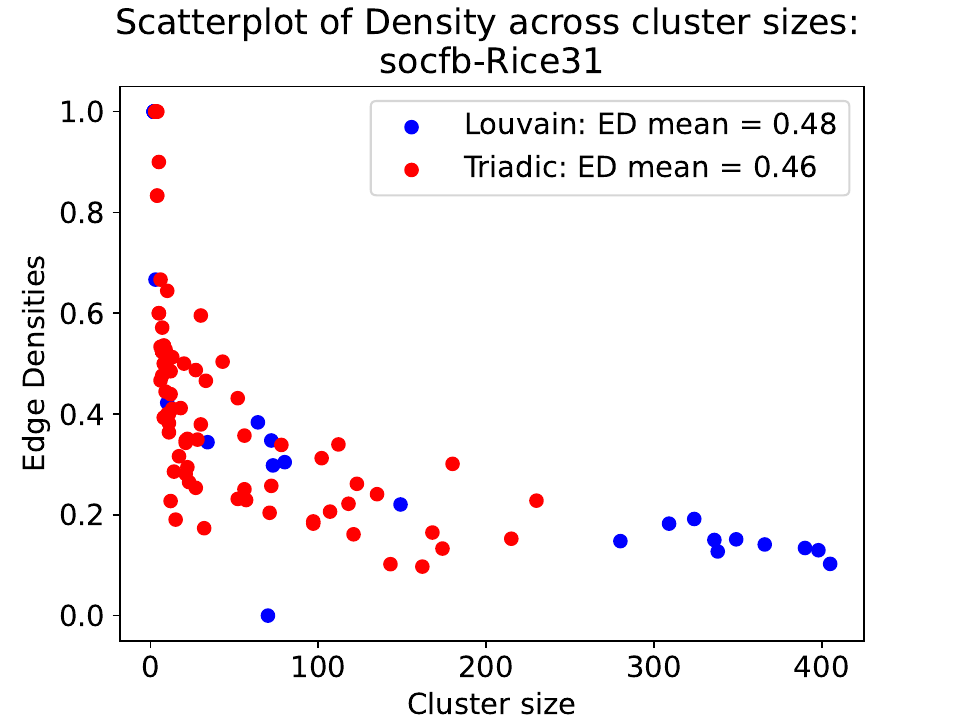}}
	\end{subfigure}
    \begin{subfigure}[ca-HepTh]
		{\includegraphics[width=0.3\textwidth]{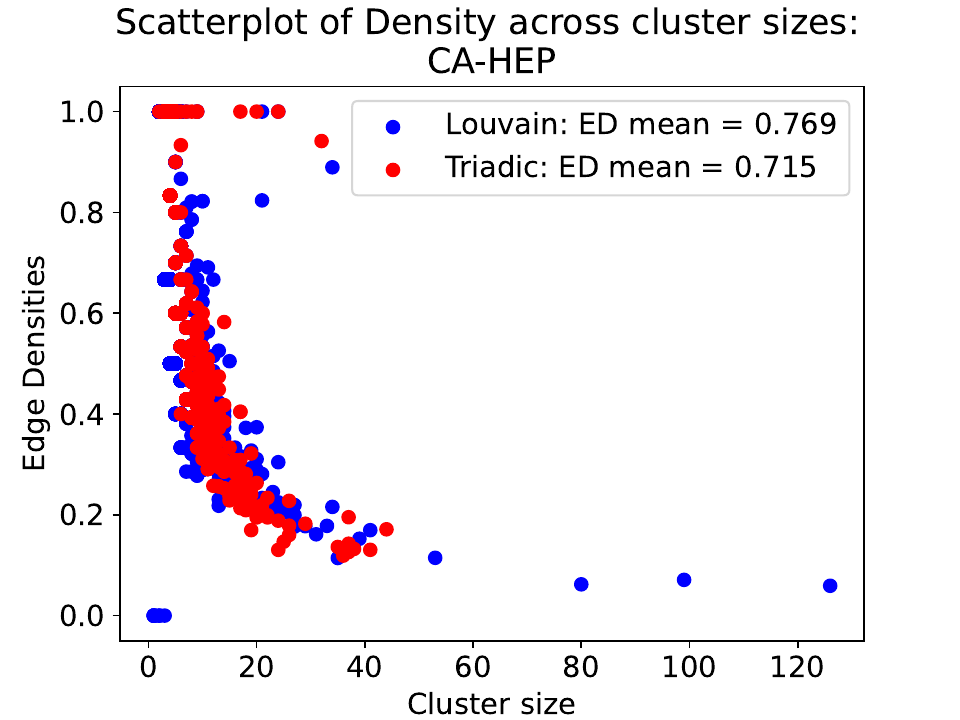}}
	\end{subfigure}
    \begin{subfigure}[ca-CondMat]
		{\includegraphics[width=0.3\textwidth]{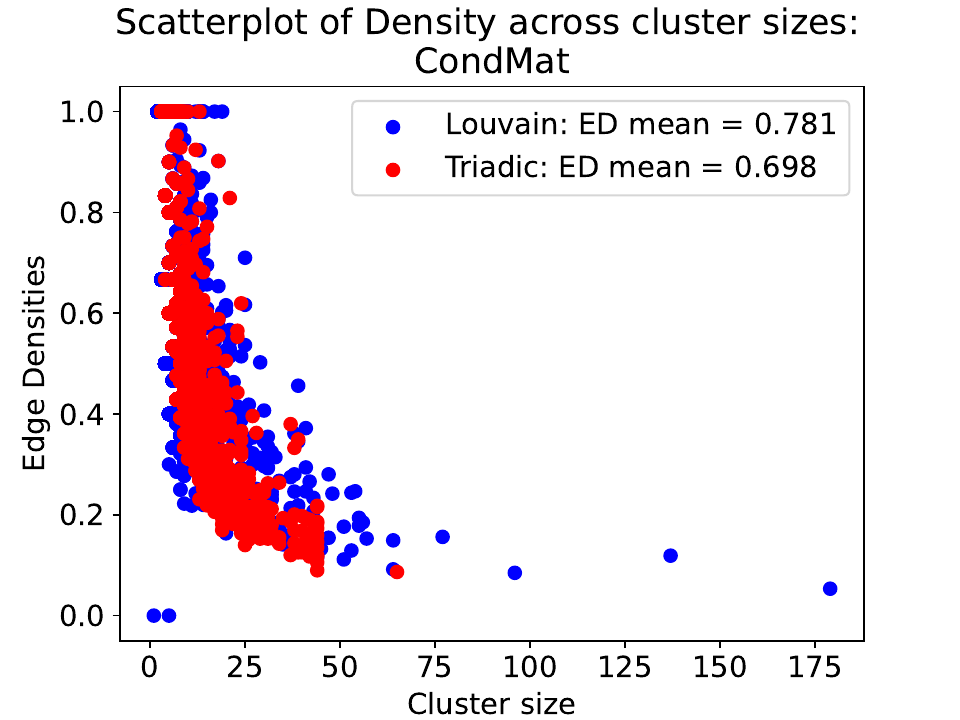}}
	\end{subfigure}	
    \begin{subfigure}[ca-CondMatL]
		{\includegraphics[width=0.3\textwidth]{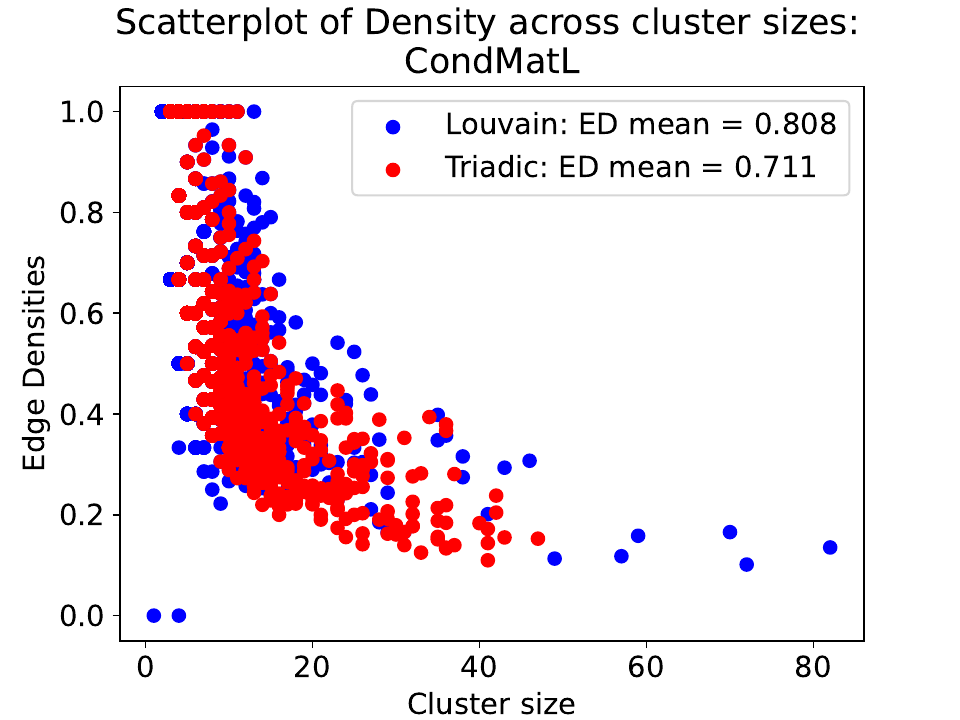}}
	\end{subfigure}
    \begin{subfigure}[cit-DBLP]
		{\includegraphics[width=0.3\textwidth]{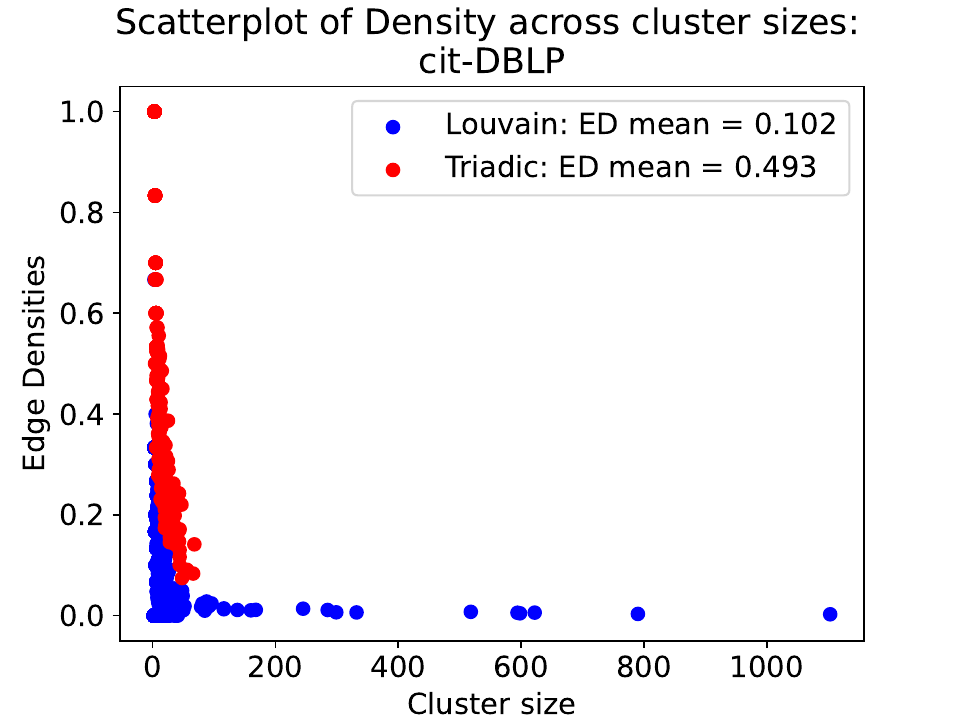}}
	\end{subfigure}
    \captionof{figure}{We give scatter plots for size vs edge density of the spectral triadic clusters and Louvain clusters. In all cases,
	spectral triadic clusters are dense and of moderate size. While many Louvain cluster are also dense, it also creates a few, extremely large, sparse clusters. Overall, the spectral triadic scatterplot
	is above the corresponding Louvain plot, though there is significant overlap.}\label{fig:density}
\end{figure*}

\paragraph{Variation of $\eps$:} 
The algorithm \decompose{} has only one parameter, $\eps$, which
determines the cleaning threshold. The algorithm is fairly stable to changes in this parameter. At higher values of $\eps$, we observe that our clusters have higher edge and triangle density. However, even for values of $\eps$ as high as $0.5$, we preserve a significant number of vertices in the network. For example, in ca-HepTh (\Fig{eps}), at $\eps=0.5$ about 15\% of the vertices are preserved in clusters with mean density over 0.9. In comparison, at $\eps=0.1$, we preserve close to 60\% of the vertices in clusters with a mean density of about $0.7$. 
\begin{figure*}[h!]
	\centering
	\begin{subfigure}[soc-hamsterster]
		{\includegraphics[width=0.3\textwidth]{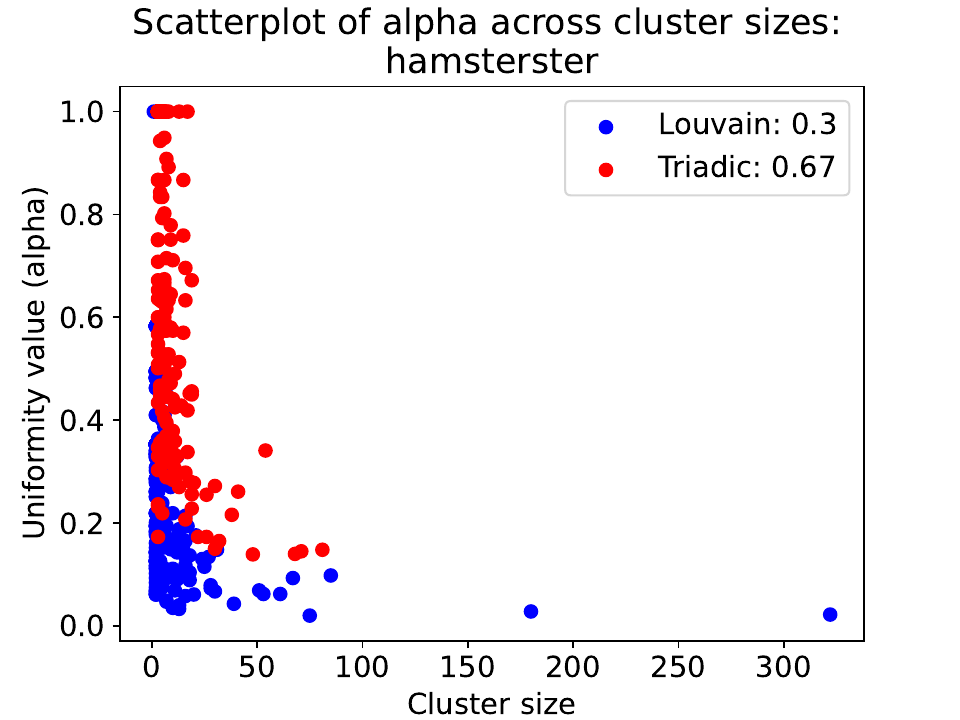}}
	\end{subfigure}
	\begin{subfigure}[socfb-Rice31]
		{\includegraphics[width=0.3\textwidth]{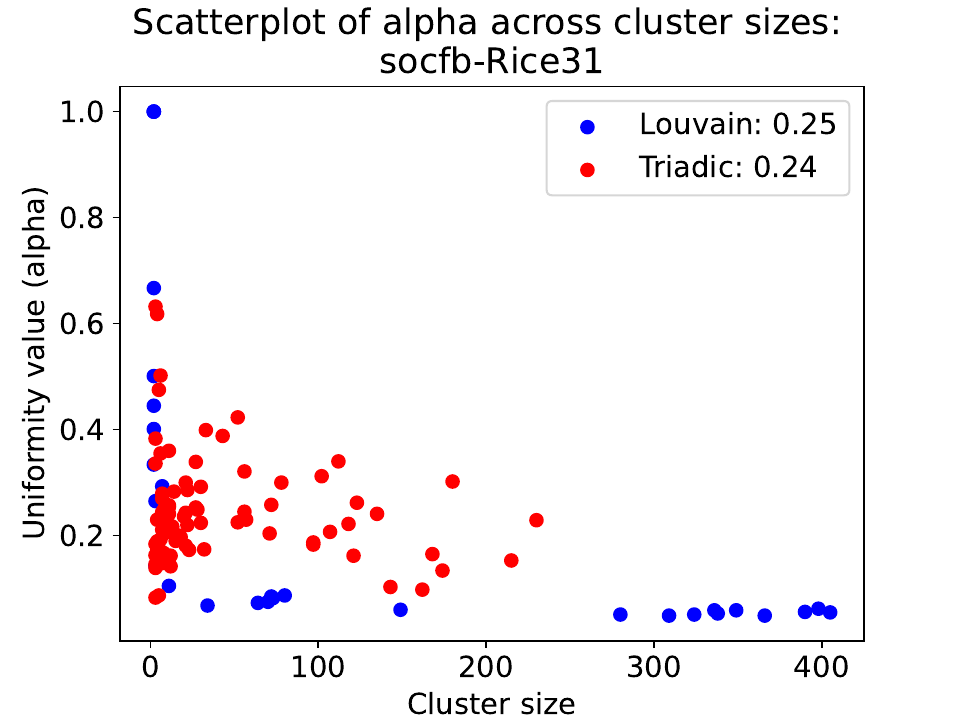}}
	\end{subfigure}
	\begin{subfigure}[ca-HepTh]
		{\includegraphics[width=0.3\textwidth]{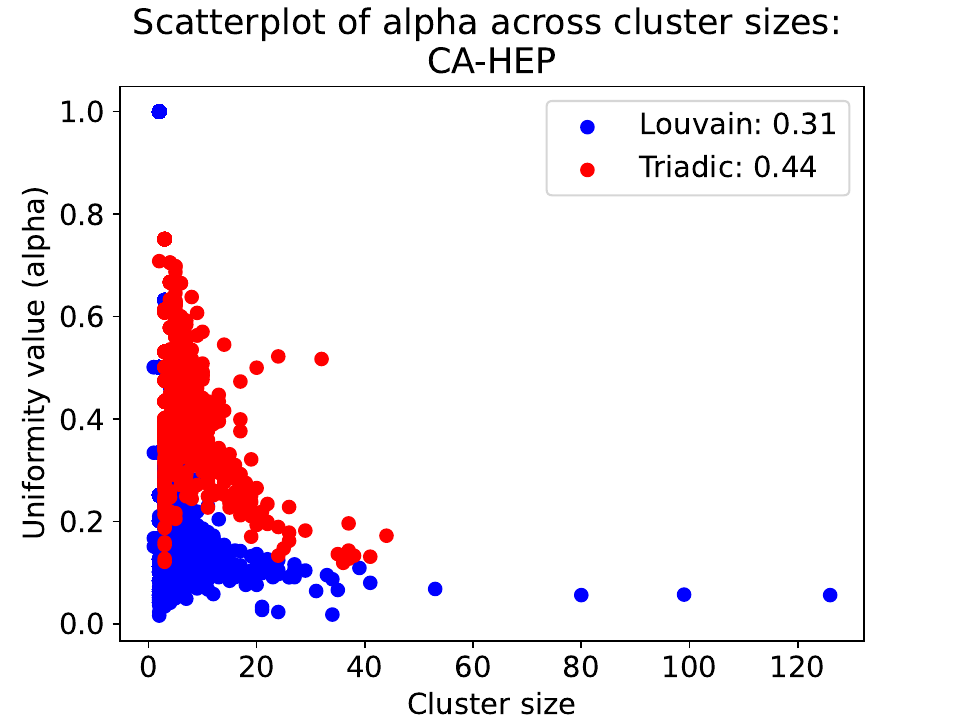}}
	\end{subfigure}

	\captionof{figure}{Uniformity across clusters in the decomposition obtained from various networks as labelled.}\label{fig:unif}
\end{figure*}

\newpage 
\subsection{Examples of semantic significance} \label{sec:semantic}

In \Fig{ground-condmat},
we show graph drawings of two example spectral triadic clusters in a co-authorship network of (over 90K) researchers in Condensed Matter Physics.
The cluster on the left has $16$ vertices and $58$ edges, and has extracted a group of researchers who specialize in optics, ultra fast atoms, and Bose-Einstein condensates.
Notable among them is the 2001 physics Nobel laureate Wolfgang Ketterle. The cluster on the right has $18$ vertices and $55$ edges, and has a group of researchers who all work on nanomaterials;
there are multiple prominent researchers in this cluster, including the 1996 chemistry Nobel laureate Richard Smalley, who discovered buckminsterfullerene.
We stress that the our decomposition found more than a \emph{thousand} such clusters.  
We also note that the Louvain decomposition missed these clusters (end of \Sec{other} and \Fig{comp-cluster}).
\begin{figure*}[h!]
	\centering
	\begin{subfigure}[Condensed Matter Physics: Cluster of researchers working on optics, ultra fast atoms, and Bose-Einstein condensates]
		{\includegraphics[width=0.45\textwidth]{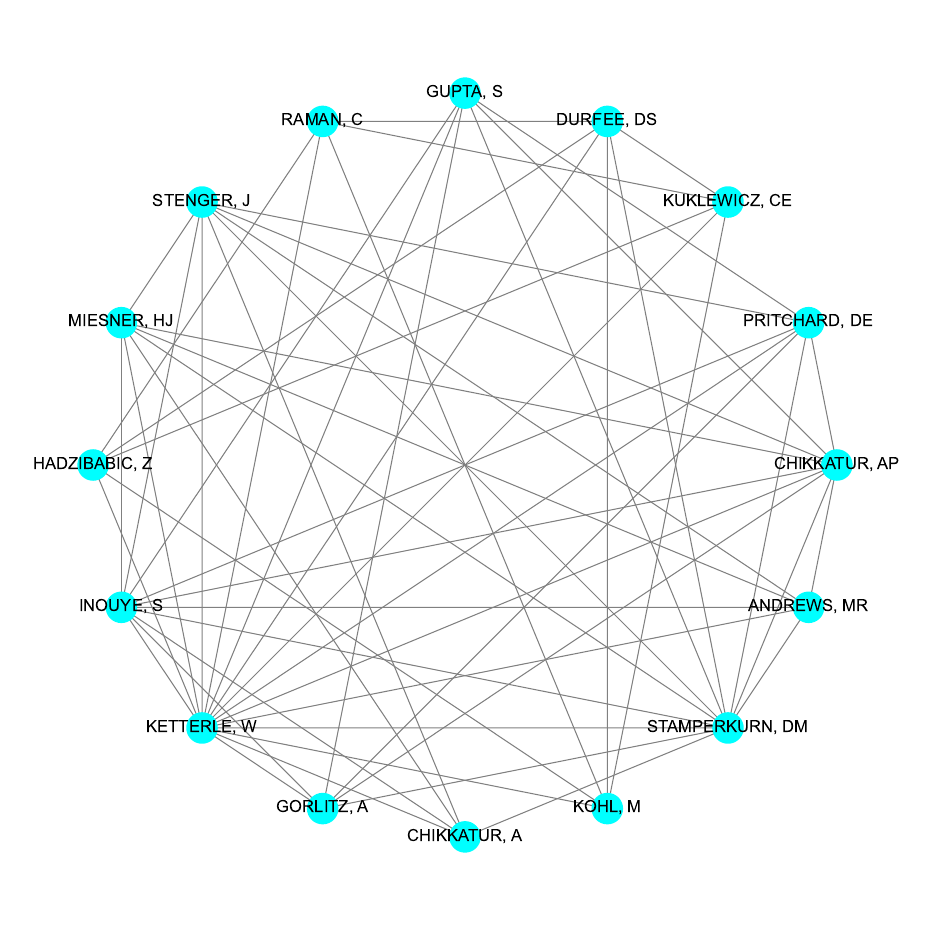}}
	\end{subfigure}
	\begin{subfigure}[Condensed Matter Physics: Cluster of researchers working on graphene, nanomaterials and topological insulators]
		{\includegraphics[width=0.45\textwidth]{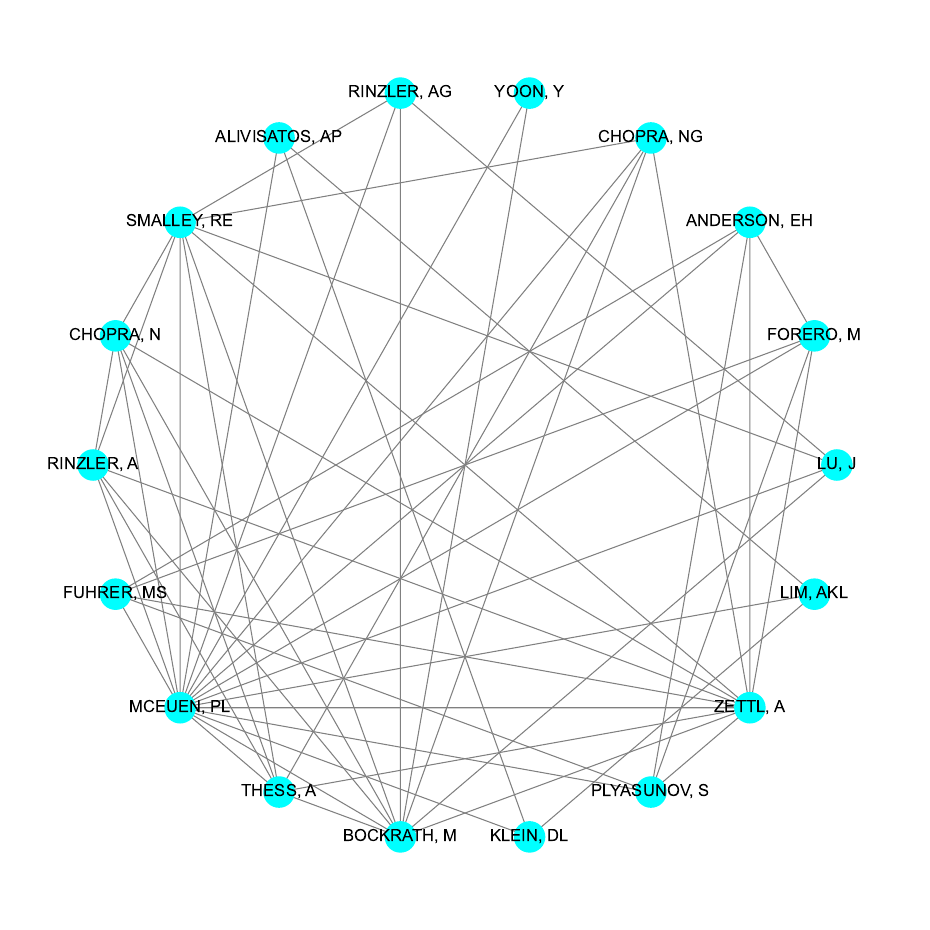}}
	\end{subfigure}
	\captionof{figure}{We show two example clusters from a spectral triadic decomposition
		of coauthorship network of researchers in Condensed Matter Physics \cite{NewmanCondMat99}, a graph with $\tau=0.25$. The left cluster is a set of $16$
		researchers ($58$ edges) working on optics and Bose-Einstein condensates (notably, the cluster has the
		2001 Physics Nobel laureate Wolgang Ketterle). The right cluster has $18$ researchers ($55$ edges)
		working on nanomaterials, including the 1996 Chemistry Nobel laureate Richard Smalley. } \label{fig:ground-condmat}
\end{figure*} As another demonstration, we look at a computer science citation network. Similar extracted clusters of research papers and articles extracted from the DBLP citation network can be seen in \Fig{labelled-dblp}. In this case, one spectral triadic cluster is a group of papers on error correcting/detecting codes, while the other is a cluster of logic program and recursive queries papers. It is surprising how well the spectral triadic decomposition finds fine-grained
structure in networks, based on just the spectral transitivity. This aspect
highlights the practical relevance of spectral theorems that decompose graphs into 
many blocks, rather that the classic Cheeger-type theorems that one produce two blocks.

\begin{figure*}[h!]
	\centering
	\begin{subfigure}[DBLP: Cluster of papers on error correcting codes ]
		{\includegraphics[width=0.45\textwidth]{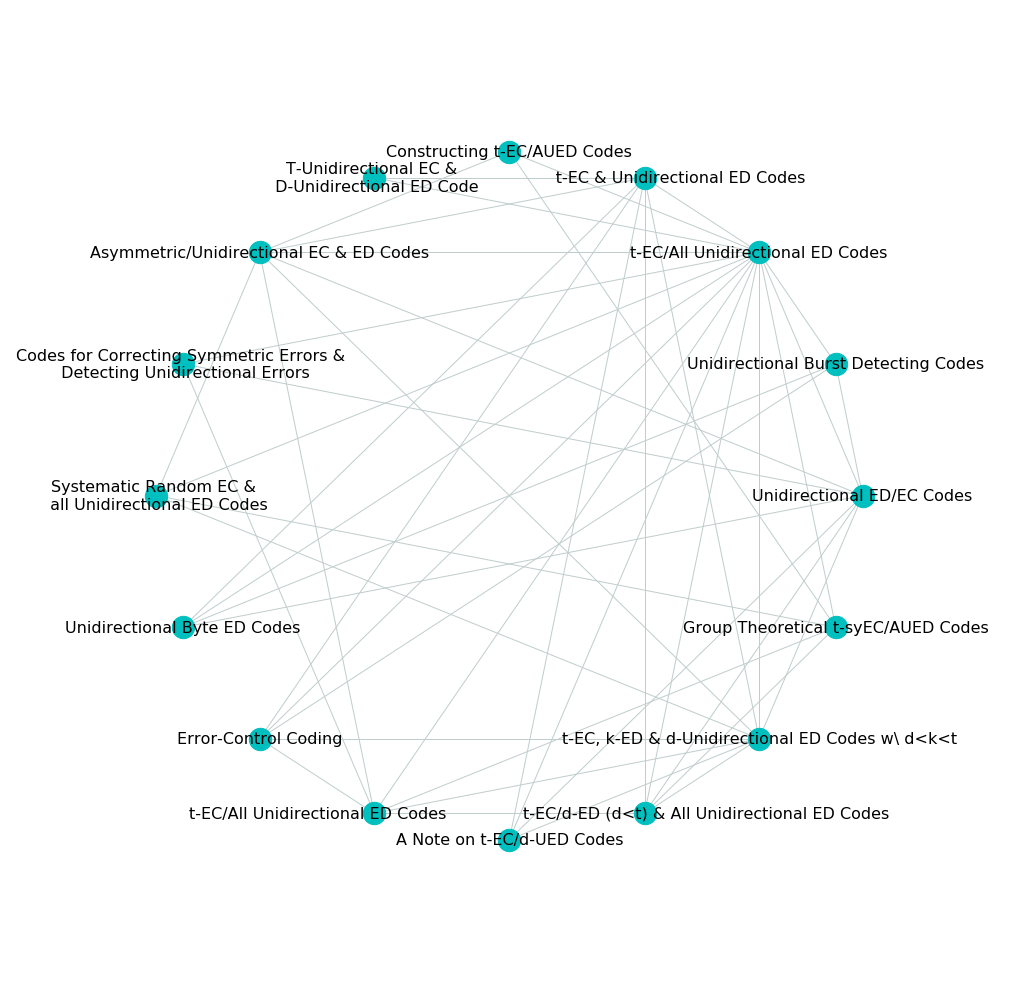}}
	\end{subfigure}
	\begin{subfigure}[DBLP: Cluster of papers on logic programs and recursive queries]
		{\includegraphics[width=0.5\textwidth]{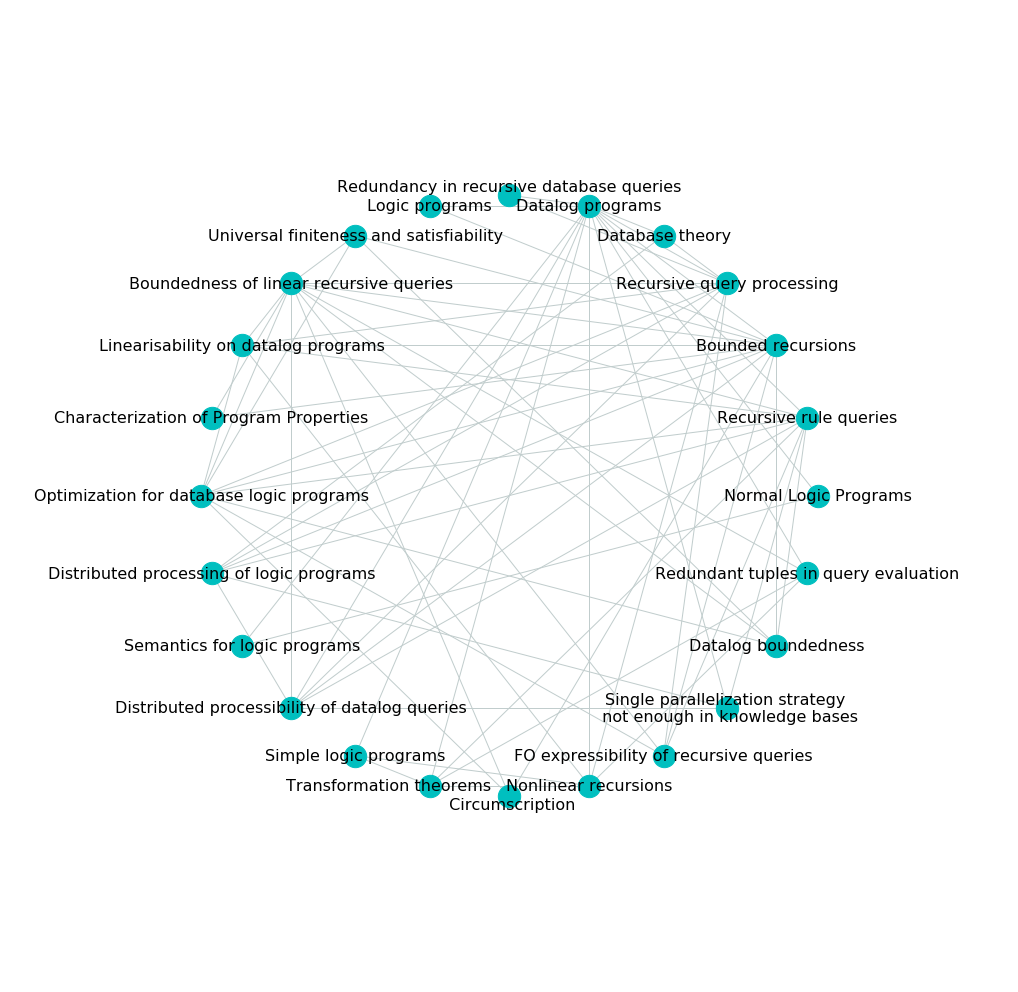}}
	\end{subfigure}
	\captionof{figure}{We show example clusters from a spectral triadic decomposition of a DBLP
		citation network, involving papers in Computer Science ~\cite{ground}. For ease of viewing, we label each vertex with relevant phrases from
		the paper title. The left cluster involves $16$ papers ($47$ edges)
		on the topic of error correcting codes. The right cluster of $24$ papers ($69$ edges) are all on the topic of
		logic programs and recursive queries, from database theory. Observe the tight synergy of topic among
		the vertices in a cluster; our procedure found \emph{thousands} of such clusters.} \label{fig:labelled-dblp}
\end{figure*}

\section{Comparisons with other methods} \label{sec:other}

	In this section, we contrast spectral triadic decompositions with some common community detection algorithms.
    In addition to the real-world datasets discussed earlier, we also experiment with simple stochastic blockmodel (SBM) graphs with a large number of small, dense clusters.
    We run the following methods.
	\begin{asparaitem}
		\item The Louvain algorithm~\cite{Louvain}: This is a classic, fast, community detection algorithm based on a heuristic that maximizes modularity.
            We use an optimized implementation of the authors~\cite{LouvainImp}. We note that results can change by doing fewer outer iterations, but we did not see
            noticeable improvement.
        \item Infomap~\cite{infomap}: This procedures uses a similar heuristic as Louvain but optimizes a different objective called the map equation.
            We use the infomap 2.7.1 package on PyPI~\cite{InfomapPYPI}.
        \item Label propagation~\cite{LabelProp}: This is a procedure analogous to belief propagation, where cluster labels are sent
            along edges. The labels diffuse in the network and converge when a vertex shares its label with many neighbors. 
            We use the community detection library (cdlib) python package~\cite{cdlib}.
        \item $k$-way Spectral cuts~\cite{kway}: This is a classic spectral algorithm to cluster graphs using the Laplacian eigenvectors.
            We use the scikit learn function \texttt{SpectralClustering} from the \texttt{cluster} module ~\cite{kway}.
	\end{asparaitem}
    We refer to our algorithm as ``Triadic" for brevity.

    \medskip

    We summarize our main findings below.
	\begin{asparaitem}
    \item We consider Stochastic Block Model settings, with a simple and clear community structure ($50$ dense communities of $20$ vertices). We note that this is 
        not the standard sparse setting studied in the SBM literature. On the other hand, it is used for network models~\cite{SeKoPi12}. Also, it gives
        a simple evaluation method with a prescribed ``true" community structure. Depending on the parameters,
            Infomap and Label Propagation give trivial results where all clusters are singletons. Louvain gives reasonable results, but makes many errors.
            Only Spectral Triadic Decompositions and spectral $k$-way clustering give perfect results. \Sec{SBMS}
    \item We construct SBMs with a few outliers. In this case, spectral $k$-way clustering is sensitive to the cluster number. When it is given
        an incorrect parameter, it returns one extremely large cluster. All other methods serious errors: either one cluster with too many vertices,
        or many singleton clusters. In all the above cases, only Spectral Triadic Decompositions give correct recovery. \Sec{SBMS}
        \item We note that our SBM settings are distinct from the standard regime of a constant number of sparse blocks. We do not expect our performance to be as strikingly good/near-perfect in that setting. However, that setting also fails the basic premise where our performance guarantees hold (high value of $\stc$), so we do not examine this regime in this article.
    \item On real datasets, spectral $k$-way clustering has extremely poor performance. It tends to create one extremely large cluster with almost all the vertices. \Tab{comp-largest} 
    \item On real datasets, Infomap gives poor results across the board. It is not able to find dense clusters and typically outputs excessively large clusters.
    \item On real datasets, Label Propagation and Louvain sometimes gives good clusters, when measuring the density. But these clusters have poor uniformity,
        implying that they are typically small dense clusters formed by a few high degree vertices. In some instances, label propagation creates
        a single large cluster with poor density. 
        We note that on one of the datasets (ca-cond-matL), these algorithms perform slightly better than Spectral Triadic Decompositions.
    \item As a demonstration, we take the example clusters of~\Fig{ground-condmat}. These were coherent clusters on scientists discovered
        by Spectral Triadic Decompositions on a coauthorship network. We tried to find the closest clusters in decompositions by other algorithms.
        Louvain and Infomap perform somewhat poorly and miss this structure: Louvain clusters are highly disconnected, while  Infomap clusters are typically several hundred vertices large with low overlap with our clusters. An examination of different values of parameters did not yield significantly better results. However, while the Label Propagation has a cluster that is nearly identical
        (but does extremely poorly on SBMs). 
	\end{asparaitem}
	
	\subsection{Stochastic Block Models: Testing for prescribed structure}\label{sec:SBMS}
	We use $\SBM(n_1, n_2, p_1, p_2)$ to denote the following stochastic block model.
    with $n_1$ blocks of size $n_2$ each, with in-probability of edges being $p_1$ and out-probability being $p_2$.  The vertices are partitioned into $n_1$ blocks of $n_2$ vertices. With a block, an edge is put with independent probability $p_1$. For edges between blocks, an edge is put with independent probability $p_2$. 

    For an illustration, we consider the setting $\SBM(50, 20, 0.9, 0.1)$. Observe that the ground truth ``communities" are highly internally connected,
    but there is some fraction of ``noise" that connects different communities/blocks.
	The results for our stochastic block model experiments are summarized in \Fig{SBM1}.
	\begin{figure}

        \begin{minipage}{0.4\textwidth}
			\includegraphics[width=0.8\textwidth]{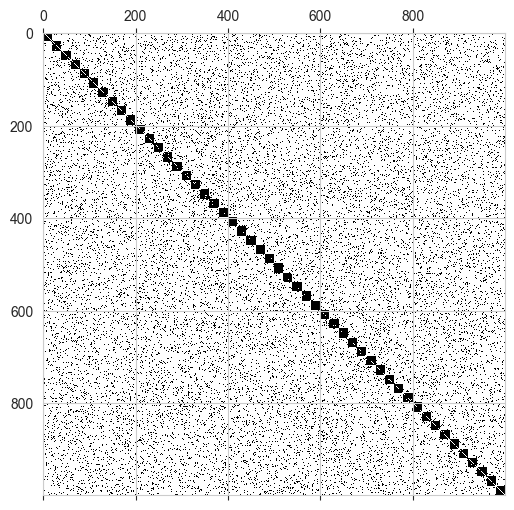}
		\end{minipage}
		\begin{minipage}{0.4\textwidth}
			\centering
			\begin{tabular}{|c||c|}
				\hline
				\multicolumn{2}{c}{$\SBM(50, 20, 0.9, 0.1)$}\\
				\hline
				Algorithm & Clusters (number$\times$size)  \\
				\hline
				Label Propagation &\cellcolor{red!25} $1\times1000$\\
				Louvain & $2\times 20, 8\times40, 3\times60, 4\times 80, 1\times 140$\\
				Infomap &\cellcolor{red!25} $1\times 1000$\\
				$50$-way spectral &\cellcolor{green!25} $20\times 50$ \\
				Triadic &\cellcolor{green!25}$20\times 50$\\
				\hline
			\end{tabular}
			
		\end{minipage}
		\caption{The outcomes of different algorithms on the $\SBM(50,20,0.9,0.1)$. The left image
        has a spyplot where the cluster structure is clearly visible. The tables describe the clusters found by various methods. The green rows
        indicate perfect recovery and the red rows denote a trivial clustering (one cluster with all vertices).}\label{fig:SBM1}
	\end{figure}
   	\begin{table}
   	
   	\begin{tabular}{|c||c|}
   		\hline
   		\multicolumn{2}{c}{SBM with outliers}\\
   		\hline
   		Algorithm & Clusters (number$\times$size)  \\
   		\hline
   		Label Propagation &\cellcolor{red!25} $1\times1000, 50\times 2$\\
   		Louvain & $12\times 22, 12\times 44, 2\times66, 2\times 88$\\
   		Infomap & $38\times 22, 6 \times 44$\\
   		$50$-way spectral &\cellcolor{red!25} $49\times 2, 1002\times 1$ \\
   		$100$-way spectral & $50\times 2, 50\times 20$ \\
   		Triadic &\cellcolor{green!25}$100\times 1, 50\times 20$\\
   		\hline
   	\end{tabular}
   	\begin{tabular}{|c||c|}
   		\hline
   		\multicolumn{2}{c}{SBM with high degree node and outliers}\\
   		\hline
   		Algorithm & Clusters (number$\times$size)  \\
   		\hline
   		Label Propagation &\cellcolor{red!25} $1\times 999, 1\times 101$\\
   		Louvain & $10\times 20, 15\times 40, 3\times60, 1\times 120$\\
   		Infomap & $35\times 20, 6 \times 40, 1\times 59, 1\times 101$\\
   		$50$-way spectral &\cellcolor{red!25} $49\times 2, 1002\times 1$ \\
   		$100$-way spectral & $50\times 2, 50\times 20$ \\
   		Triadic &\cellcolor{green!25}$100\times 1, 50\times 20$\\
   		\hline
   	\end{tabular}
   	\caption{Summary of clusters by different methods in SBMs with outliers. Both have a ground truth of $50$ clusters of size $20$, with $100$ outlier vertices. Again, we see that Spectral Triadic Decompositions
   		perform an exact recovery. Notably, $k$-way spectral clustering fails when give the right value of $k$, and needs $k = 100$ to succeed.} \label{tab:noisesbmtab}
   \end{table}
Both Infomap and Label Propagation simply cluster all vertices into a single set in both cases. Louvain gives better results
by does create some large clusters of size 80 and 140. Only Triadic and 50-way spectral get perfect recovery.

But we observe that $k$-way spectral is highly sensitive to the value $k$, which is unknown (and extremely large) in practice.
We consider another experiment, where we add some outlier noise to the SBM. We add 100 outlier vertices to the SBM in two different ways.
In the first case, we randomly pair these vertices into 50 disjoint edges. Then, each pair is connected to a random vertex.
(This is called ``SBM with outliers".) In the second case, one of these outliers are made into a ``celebrity", where
they connect to 100 vertices (SBM with high degree node and outliers).

The results are in \Tab{noisesbmtab}. In all case, Triadic gets perfect recovery. When we run spectral cluster with $k=50$,
the results are completely erroneous. One needs to set a larger $k = 100$ to get anything non-trivial. 
All other methods make various errors, either creating a cluster that is too large,
or creating too many tiny clusters.

\subsection{Real Graphs}\label{sec:realsec}
We now compare the results on real graphs. For convenience, we focus on 4 of the datasets, though we see consistent results. 
We present specific shortcomings of all the other methods. 

\paragraph{Excessively large clusters:} Barring Louvain (and Triadic), all other methods tend to create one large cluster
of extremely low density. Such a cluster would not be considered a community by any definition. We present the data in \Tab{comp-largest}.
Each row corresponds to a method. Each entry gives the number and fraction of vertices in the largest cluster created
that method, for a given dataset. We also give the edge density of the cluster. 

We observe the absymal performance of $k$-way spectral clustering, which often creates a cluster with more than 90\% of all vertices. 
The edge densities are less than $10^{-3}$ in almost all cases.
Label Propagation and Infomap also create large clusters with low density, though these values are better than $k$-way spectral clustering.
For socfb-Rice31 graph, Label Propagation creates a cluster with 99.9\% of the vertices having edge density $0.02$.
For the soc-hamsterster graph, Infomap creates a cluster with 56\% of the vertices, and edge density $0.01$. 
In contrast, both Triadic creates hundreds of clusters of size at most hundreds, with edge densities at least an order of magnitude
higher. Moreover, the largest triadic clusters cover between 0.3\% and 5\% of the graphs (\Tab{comp-largest}).

Only Louvain has a reasonable performance across all datasets. But as we showed in~\Sec{details-exp}, the edge densities are
lower than that of Spectral Triadic Decompositions. Moreover, from \Tab{comp-largest} we can see that Louvain tends to aggregate into larger clusters.

\begin{table}
	\centering
	\begin{tabular}{|c||c|c||c|c||c|c||c|c|}
		\hline
		\multirow{2}{*}{Algorithm} &\multicolumn{2}{c||}{soc-hamsterster} &\multicolumn{2}{c||}{socfb-Rice31} &\multicolumn{2}{c||}{ca-cond-matL} &\multicolumn{2}{c|}{ca-cond-mat}\\ \cline{2-9}
		& Max(\%$|V|$) & Dens. &Max(\%$|V|$) & Dens. & Max(\%$|V|$) & Dens. & Max(\%$|V|$) & Dens. \\
		\hline
		20-way spec&2197 (91.6\%)&7$\cdot10^{-3}$&1943 (47.5\%)&4$\cdot10^{-2}$&14854 (91.3\%)&7$\cdot10^{-4}$&22467 (97.1\%)&4$\cdot10^{-4}$\\
		60-way spec&2142 (80.7\%)&7$\cdot10^{-3}$&2120 (51.9\%)&4$\cdot 10^{-2}$&14535 (89.4\%)&8$\cdot10^{-4}$&22132 (95.7\%)&4$\cdot10^{-4}$\\
		100-way spec&2060 (77.8\%)&8$\cdot10^{-3}$&2257 (55.2\%)&4$\cdot10^{-2}$&14261 (87.7\%)&8$\cdot10^{-4}$&21924 (94.8\%)&4$\cdot10^{-4}$\\ \hline
		Label Prop & 1378 (56.8\%) & 0.013 & 4083 (99.9\%) & 0.02 & 79 (0.5\%) & 0.118 & 329 (1.4\%) & 0.03 \\ \hline
		Infomap & 1989 (82.0\%) & 0.008 & 432 (10.6\%) & 0.076 & 1522 (9.4\%) & 0.003 & 6684 (28.9\%) & 0.001 \\ \hline
		Louvain & 322 (13.3\%) & 0.046 & 405 (9.9\%) & 0.103 & 82 (0.5\%) & 0.135 & 179 (1.1\%) & 0.054 \\ \hline
		Triadic & 81 (3.3\%) & 0.148 & 230 (5.6\%) & 0.228 & 47 (0.3\%) & 0.153 & 65 (0.3\%) & 0.087 \\
		\hline
	\end{tabular}
\caption{Sizes of the largest clusters extracted by each method, and their corresponding edge densities. Max refers to the size of the largest cluster. 
    In brackets, we show the size as a percentage of the whole graph. It is immediate that other methods give larger, sparser clusters that Triadic.} \label{tab:comp-largest}
\end{table}

\begin{figure}[h!]
\centering
\begin{subfigure}[Louvain: overlap of 3]
	{\includegraphics[width=0.30\textwidth]{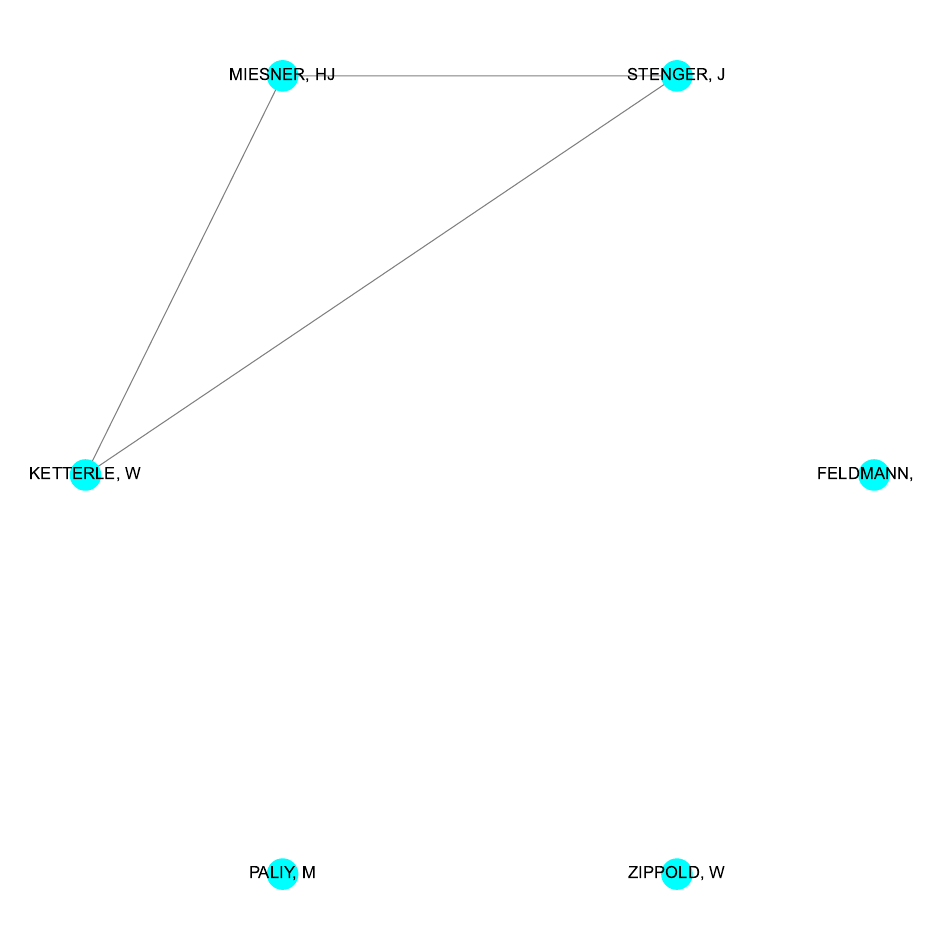}}
\end{subfigure}
\begin{subfigure}[Infomap: overlap of 7]
	{\includegraphics[width=0.30\textwidth]{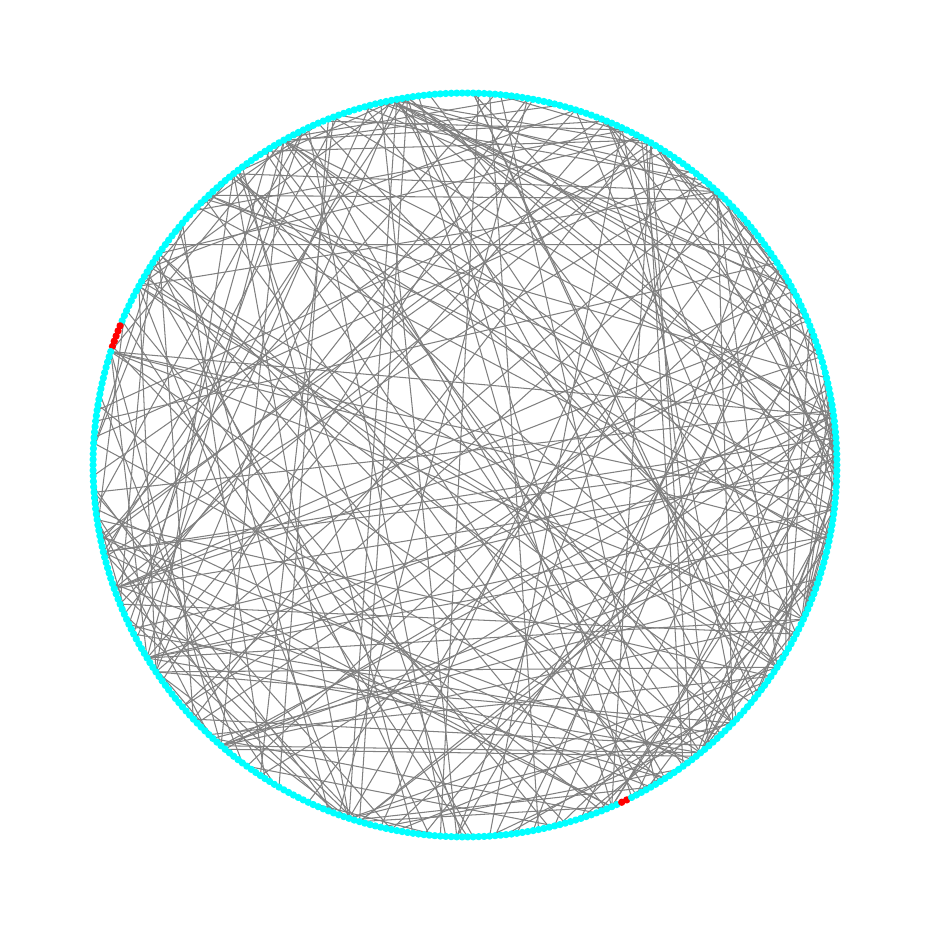}}
\end{subfigure}
\begin{subfigure}[Label propagation: overlap of 16]
	{\includegraphics[width=0.30\textwidth]{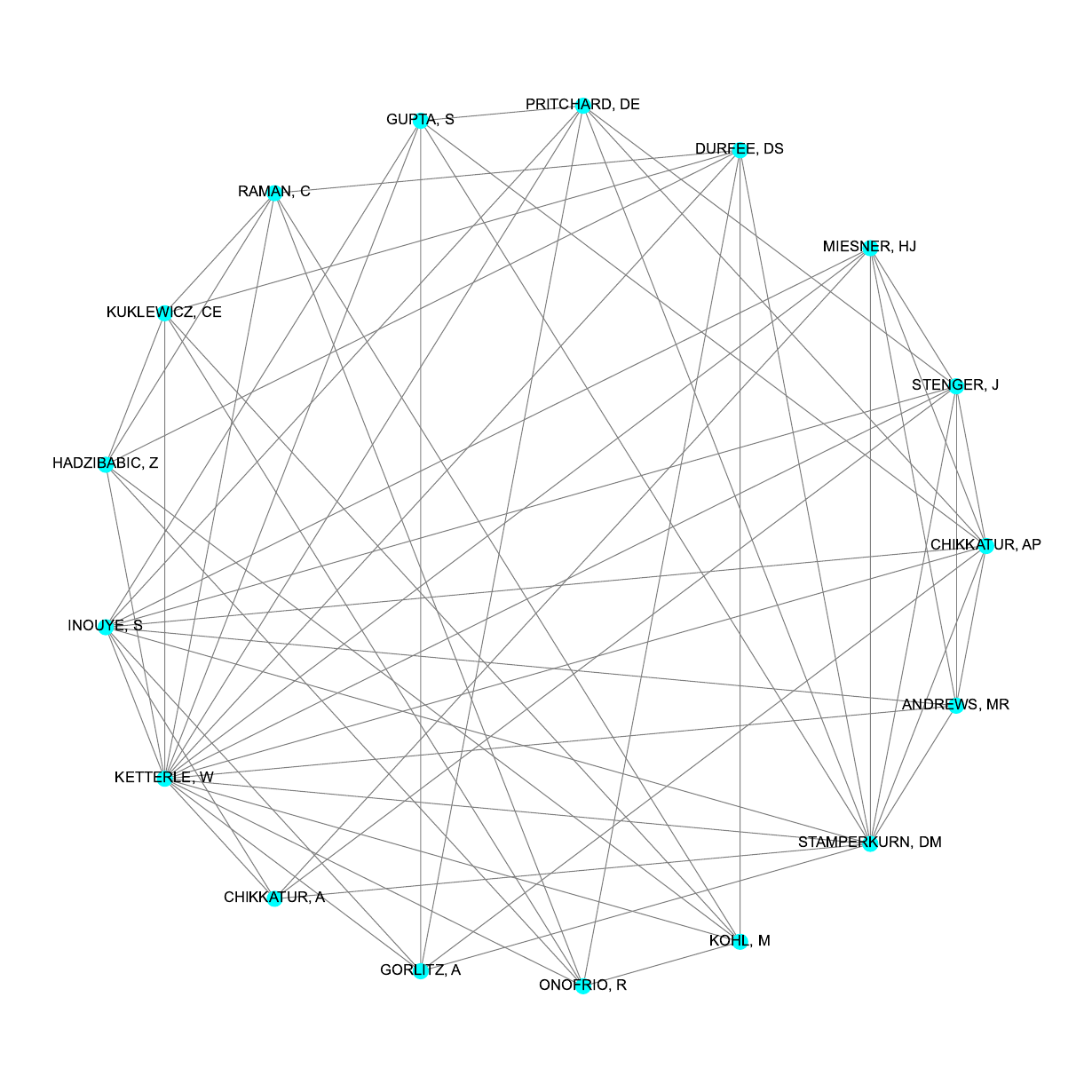}}
\end{subfigure}
\begin{subfigure}[Louvain: overlap of 3]
	{\includegraphics[width=0.30\textwidth]{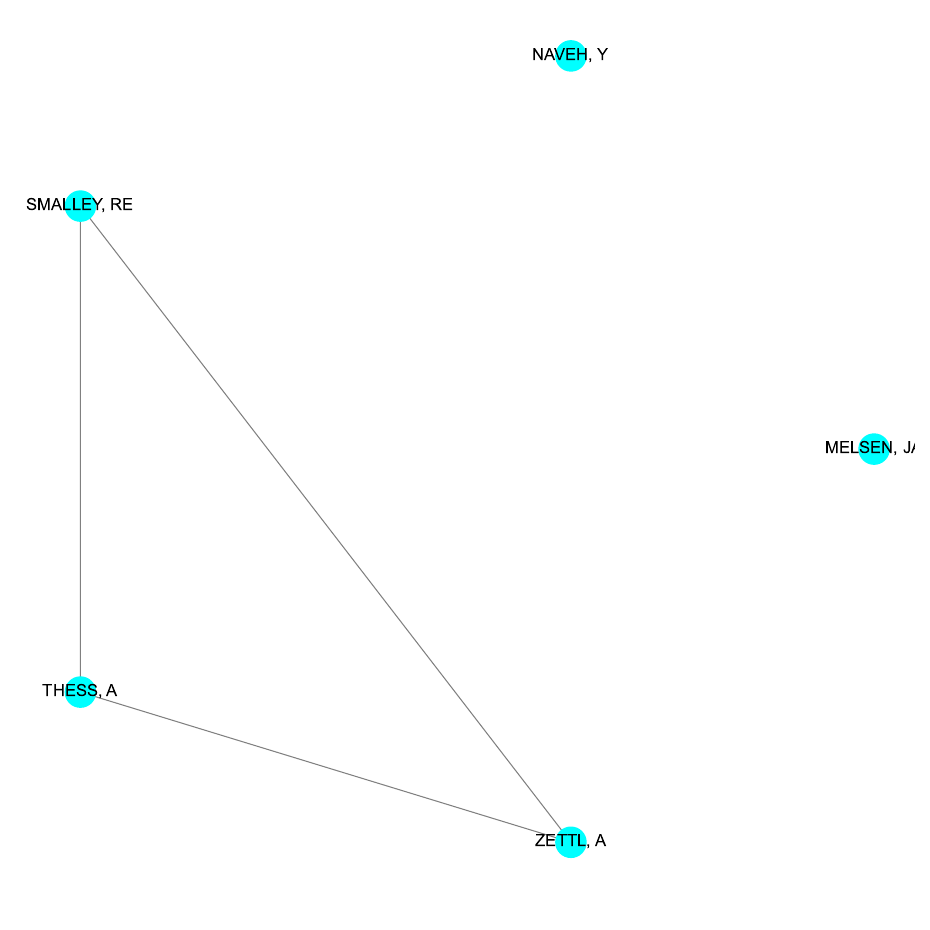}}
\end{subfigure}
\begin{subfigure}[Infomap: overlap of 2]
	{\includegraphics[width=0.30\textwidth]{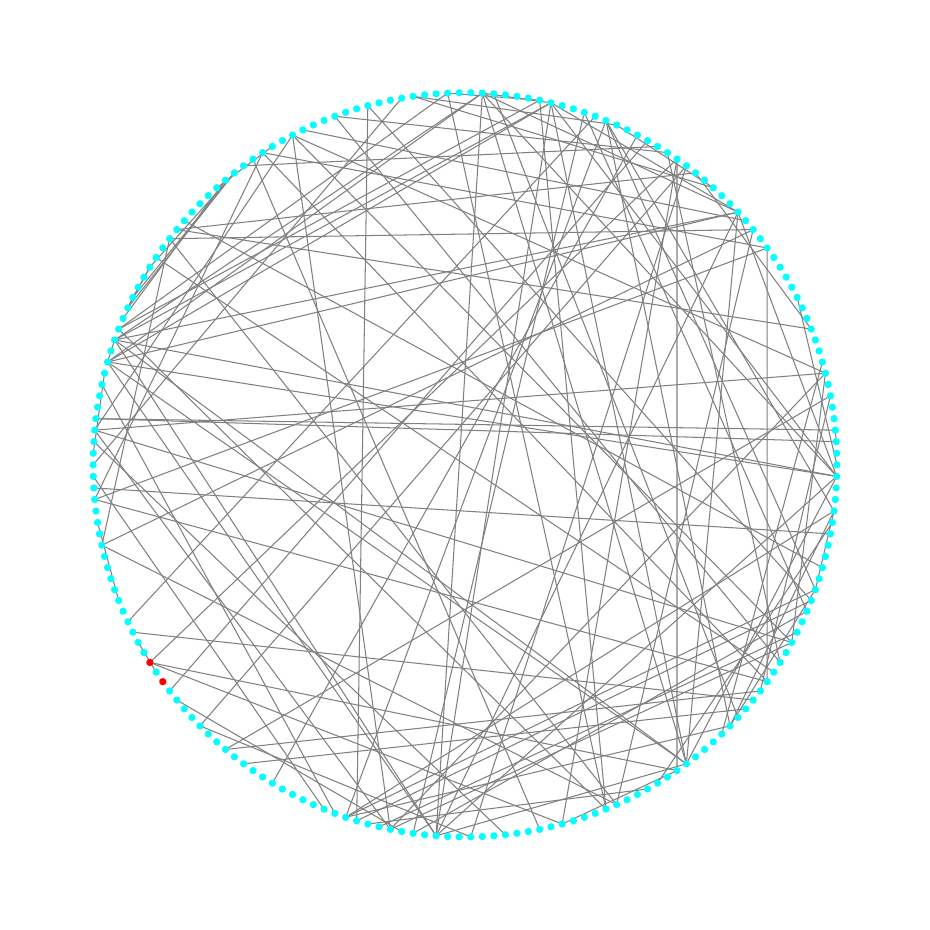}}
\end{subfigure}
\begin{subfigure}[Label propagation: overlap of 15]
	{\includegraphics[width=0.30\textwidth]{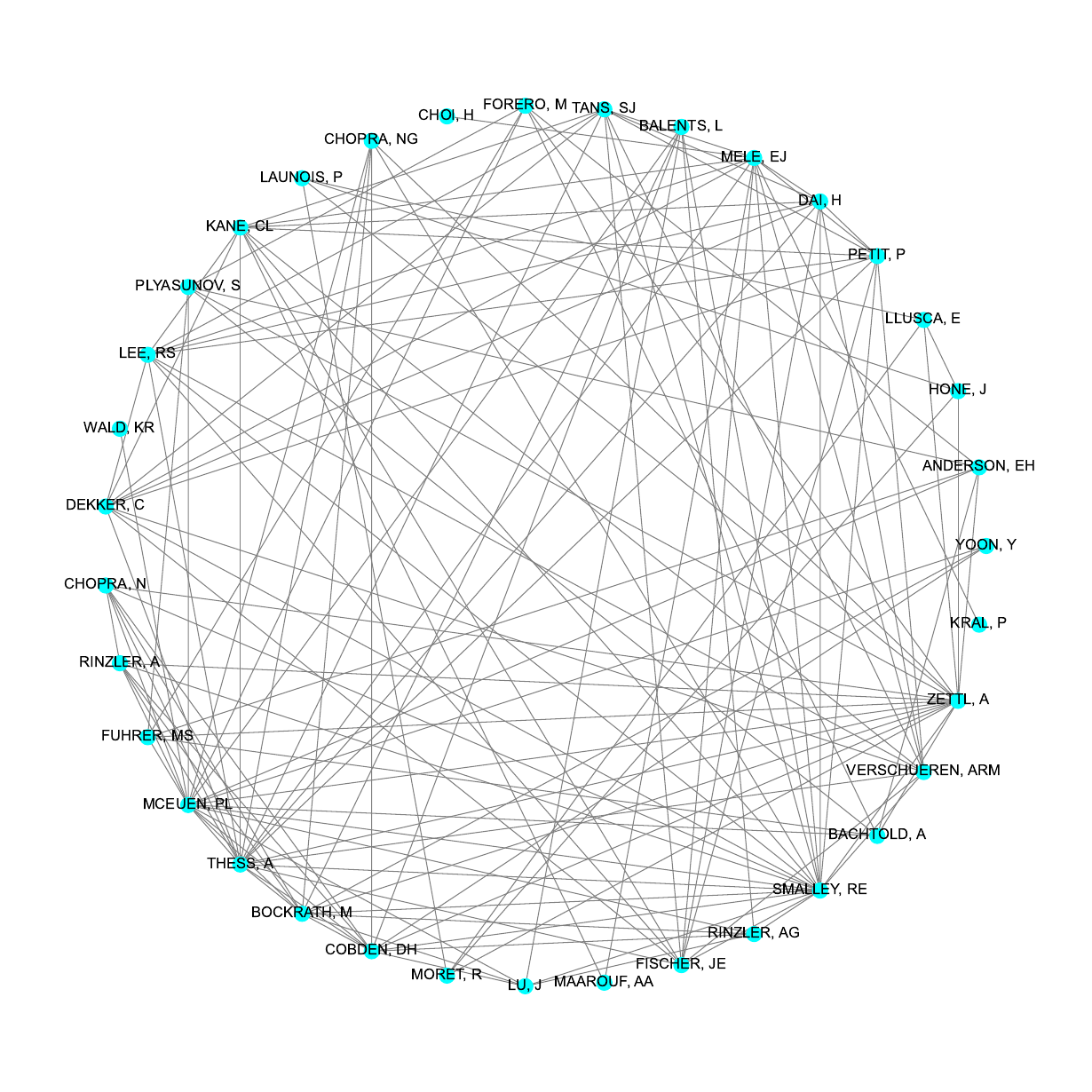}}
    \caption{We consider the Triadic clusters given in \Fig{ground-condmat}, and look for the closest among
    the Louvain, Infomap, and Label Propagation clusters. We observe that Louvain and Infomap completely miss the structure. Label
    Propagation has an extremely close match.} \label{fig:comp-cluster}
\end{subfigure}

\end{figure}

\paragraph{Comparison on labeled clusters:} We take the two clusters from \Fig{ground-condmat} found by Triadic, on a ca-condMatL.
Recall that these were both semantically meaningful, with clusters of researchers on optics and nanomaterials respectively.
For each cluster, we look for the best match among the clusters generated by (say) Louvain. We simply look for the Louvain
cluster with the highest overlap. This process is repeated for the other methods. 

The closest clusters are shown in \Fig{comp-cluster}. Louvain is not able to extract this semantically meaningful cluster.
The closest cluster is much smaller, sparse and disconnected, as noted in previous work~\cite{Leiden}. An examination in different parameters/settings or even even higher levels in the Louvain tree did not yield significantly better results.
Thus, this demonstrates the better performance on Triadic on discovering ground truth. Infomap also fails, but in the 
opposite direction. It finds an excessively large cluster with poor overlap. Label Propagation performs the best
and is able to get extremely similar clusters.

	\bibliographystyle{plain}
	\bibliography{triangle-dense}
\end{document}